\documentclass{article}

\usepackage[numbers]{natbib}
\usepackage{PRIMEarxiv}

\usepackage[utf8]{inputenc} % allow utf-8 input
\usepackage[T1]{fontenc}    % use 8-bit T1 fonts
\usepackage{hyperref}       % hyperlinks
\usepackage{url}            % simple URL typesetting
\usepackage{booktabs}       % professional-quality tables
\usepackage{amsfonts}       % blackboard math symbols
\usepackage{nicefrac}       % compact symbols for 1/2, etc.
\usepackage{microtype}      % microtypography
\usepackage{xcolor}         % colors

\usepackage{natbib}

\usepackage{multirow}
\usepackage{subfigure}
\usepackage{booktabs} % for professional tables
\usepackage{tabularx}
\usepackage{bm}
\usepackage{utfsym}

% hyperref makes hyperlinks in the resulting PDF.
% If your build breaks (sometimes temporarily if a hyperlink spans a page)
% please comment out the following usepackage line and replace
% \usepackage{icml2022} with \usepackage[nohyperref]{icml2022} above.
\usepackage{hyperref}

\usepackage{amsmath}
\usepackage{amssymb}
\usepackage{mathtools}
\usepackage{amsthm}
\usepackage{mathrsfs}
% if you use cleveref..
%\usepackage[capitalize,noabbrev]{cleveref}

%%%%%%%%%%%%%%%%%%%%%%%%%%%%%%%%
% THEOREMS
%%%%%%%%%%%%%%%%%%%%%%%%%%%%%%%%
\theoremstyle{plain}
\newtheorem{theorem}{Theorem}[section]
\newtheorem{proposition}[theorem]{Proposition}
\newtheorem{lemma}[theorem]{Lemma}
\newtheorem{corollary}[theorem]{Corollary}
\theoremstyle{definition}

\newtheorem{assumption}[theorem]{Assumption}
\theoremstyle{remark}

\newtheorem{condition}[theorem]{Condition}

% Attempt to make hyperref and algorithmic work together better:
\usepackage{algorithmic}
\usepackage[linesnumbered,ruled,vlined]{algorithm2e}

\usepackage{graphicx}
\usepackage{xspace}

\newcommand{\bE}{\mathbb{E}}

\newcommand{\bH}{\mathbb{H}}

\usepackage{bbding}
\usepackage{extarrows}

\usepackage[shortlabels]{enumitem}
\newcommand{\zhiheng}[1]{{\textcolor{red}{#1}}}

\definecolor{myblue}{RGB}{0,0,255}

%% Title
\title{Unveiling Environmental Sensitivity of Individual Gains in Influence Maximization
%%%% Cite as
%%%% Update your official citation here when published 
 
}

\author{
  Xinyan Su  $^{1,2}$\thanks{
\textbf{Equal contribution.}} \quad Zhiheng Zhang   $^{3}$$^*$,\quad Jiyan Qiu$^{1,2}$ \\
  $^{1}{\text{Computer Network Information Center, Chinese Academy of Sciences}}$ \\
  $^{2}{\text{University of Chinese Academy of Sciences}}$ \\
  \texttt{suxinyan@cnic.cn} \\ \\
  %% examples of more authors
  $^{3}{\text{Institute for Interdisciplinary Information Sciences, Tsinghua University}}$ \\
  \texttt{zhiheng-20@mails.tsinghua.edu.cn} \\
  %% \AND
  %% Coauthor \\
  %% Affiliation \\
  %% Address \\
  %% \texttt{email} \\
  %% \And
  %% Coauthor \\
  %% Affiliation \\
  %% Address \\
  %% \texttt{email} \\
  %% \And
  %% Coauthor \\
  %% Affiliation \\
  %% Address \\
  %% \texttt{email} \\
}

\begin{document}
\maketitle

\begin{abstract}
Influence Maximization (IM) is to identify the seed set to maximize information dissemination in a network. Elegant IM algorithms could naturally extend to cases where each node is equipped with a specific weight, reflecting individual gains to measure the node’s importance. In these prevailing literatures, they typically assume such individual gains remain constant throughout the cascade process and are solvable through explicit formulas based on the node’s characteristics and network topology. However, this assumption is not always feasible for two reasons: 1) \textit{Unobservability}: The individual gains of each node are primarily evaluated by the difference between the outputs in the activated and non-activated states. In practice, we can only observe one of these states, with the other remaining unobservable post-propagation. 2) \textit{Environmentally sensitivity}: In addition to the node’s inherent properties, individual gains are also sensitive to the activation status of surrounding nodes, which is dynamic during iteration even when the network topology remains static. To address these challenges, we extend the consideration of IM to a broader scenario with dynamic node individual gains, leveraging causality techniques. In our paper, we introduce a  Causal Influence Maximization (CauIM) framework and develop two algorithms, G-CauIM and A-CauIM, where the latter incorporates a novel acceleration technique. Theoretically, we establish the generalized lower bound of influence spread and provide robustness analysis. Empirically, in synthetic and real-world experiments, we demonstrate the effectiveness and robustness of our algorithms. 
\end{abstract}

\section{Introduction}

\label{intro}
Information propagation over networks has been booming in recent years. Due to the power of the "word-of-mouth" phenomenon, influence spread has been demonstrated as a necessity in various applications, such as viral marketing~\citep{chen2010scalable}, HIV prevention~\citep{wilder2018end} and recommendations~\citep{coro2021link}. The problem to select the seed set to maximize information spread is known as the \textbf{I}nfluence \textbf{M}aximization (IM)~\citep{kempe2003maximizing}.

% While existing studies have developed IM algorithms for nodes with attributes, they have overlooked 

% Moving beyond optimizing the total number of infected nodes, current literature has been dedicated to investigating the individual effect of each node in real-life scenarios, which is so-called weighted-IM~\citep{wang2016efficient, wang2017activity,litou2020cost,han2021efficient}. Practically, researchers aim to address the question: \textit{How can limited resources be utilized to maximize total economic gains?} The issue arises in various scenarios, including advertising placement and awareness dissemination. For instance, in advertising placement targeting users with differing purchasing power (referred to as the individual effect in this context), the objective is to choose specific users for advertising and maximize the overall revenue increase pre- and post-placement.

Beyond optimizing the total number of infected nodes, current research has focused on investigating the individual gains of each node in real-world scenarios, referred to as weighted-IM~\citep{wang2016efficient, wang2017activity,han2021efficient}. Researchers endeavor to address the question: \textit{how limited resources can be utilized to maximize total gains?} This challenge manifests in various network scenarios, such as student networks and email networks, involving activities like awareness dissemination and product promotion. For example, when targeting users with varying purchasing power in product promotion, these users exhibit diverse purchasing behaviors  resulting in varying profits for the seller.  Here, regarding purchasing power as individual gains, the goal is to identify specific users for product advertising and and optimize the overall difference in profit gains pre- and post-product promotion dissemination.

Researchers usually assume that such purchasing power of each node remains observable and stable~\cite {kempe2015maximizingx} during the whole process. Such a weighted IM setting seems to be a natural extension of traditional IM and hence leads to relatively limited exploration~\cite{kempe2003maximizing}. However, in practice, this setting would be violated, and we summarize it as two fundamental properties as illustrated in Figure~\ref{intro-exp}
% Considering the interpersonal dynamics within general user networks, the expected purchasing power of each individual is influenced by the attitudes of others within their social circles. 
% The individual effect refers to the contribution of each node to the growth of benefits in an event, which varies based on whether the node itself is affected, and this effect is also influenced by surrounding environments.
% Addressing this issue involves the challenging estimation of the unknown individual effects of nodes, while current IM methods emphasizes optimizing the total number of infected nodes. 
: \textit{(i) unobservability}. Accurately quantifying the actual purchasing power of each user is hindered by the limited observations of purchase occurrences (where only two outcomes are observable: activated or not, corresponding to purchase or non-purchase for each node; one represents the ``factual'', and the other ``counterfactual''), thereby complicating the determination of the actual increase in benefit gains that each user can deliver to sellers;  \textit{(ii) environmental sensitivity}. The expected purchasing power of each person is not only connected to the individuals themselves but also influenced by the attitudes of their social contacts. For example, as more people in the friend circle make purchases, individuals become increasingly susceptible to influence and are likely to make additional purchases. Such properties emphasize that individual gains in IM are environmentally dynamic and challenging to ascertain, posing challenges in their computation.

To tackle these challenges, we employ causality techniques to model the advertising problem, inspired by the concept of Individual Treatment Effect (ITE)~\citep{pohl2008modelling}. It measures the disparity between the factual world and the counterfactual ones, which goes beyond traditional observational studies, such as those based on network structure or direct assignment of other feature weights~\citep{kempe2003maximizing,gao2020viral}. By incorporating the process of inference, we transition the problem from observational study to the direct utilization of ground truth for measuring individual gains at the node level. Drawing from this, we propose the \textbf{Cau}sal \textbf{I}nfluence \textbf{M}aximization (CauIM) framework. Specifically, in the hypergraph modeling\footnote{The utilization of IM in \text{hypergraphs}~\citep{antelmi2021social, xie2022influence} introduces a higher-order structure that establishes connections between clusters,  effectively reflecting real-life relationships on general graph, and surrogates the traditional normal graph as a special case.}, we redefine the objective function of traditional IM by incorporating ITE as weights assigned to each node. It incorporates the internal covariate information and external environmental information for each node (Figure ~\ref{intro-exp}).
% For acceleration, moreover, we further optimize the algorithms 

% using probability graphs\zhiheng{should be specific after coding!}.

\begin{figure*}[t]
%\vskip 0.2in
    \centering
    \includegraphics[width = 12cm]{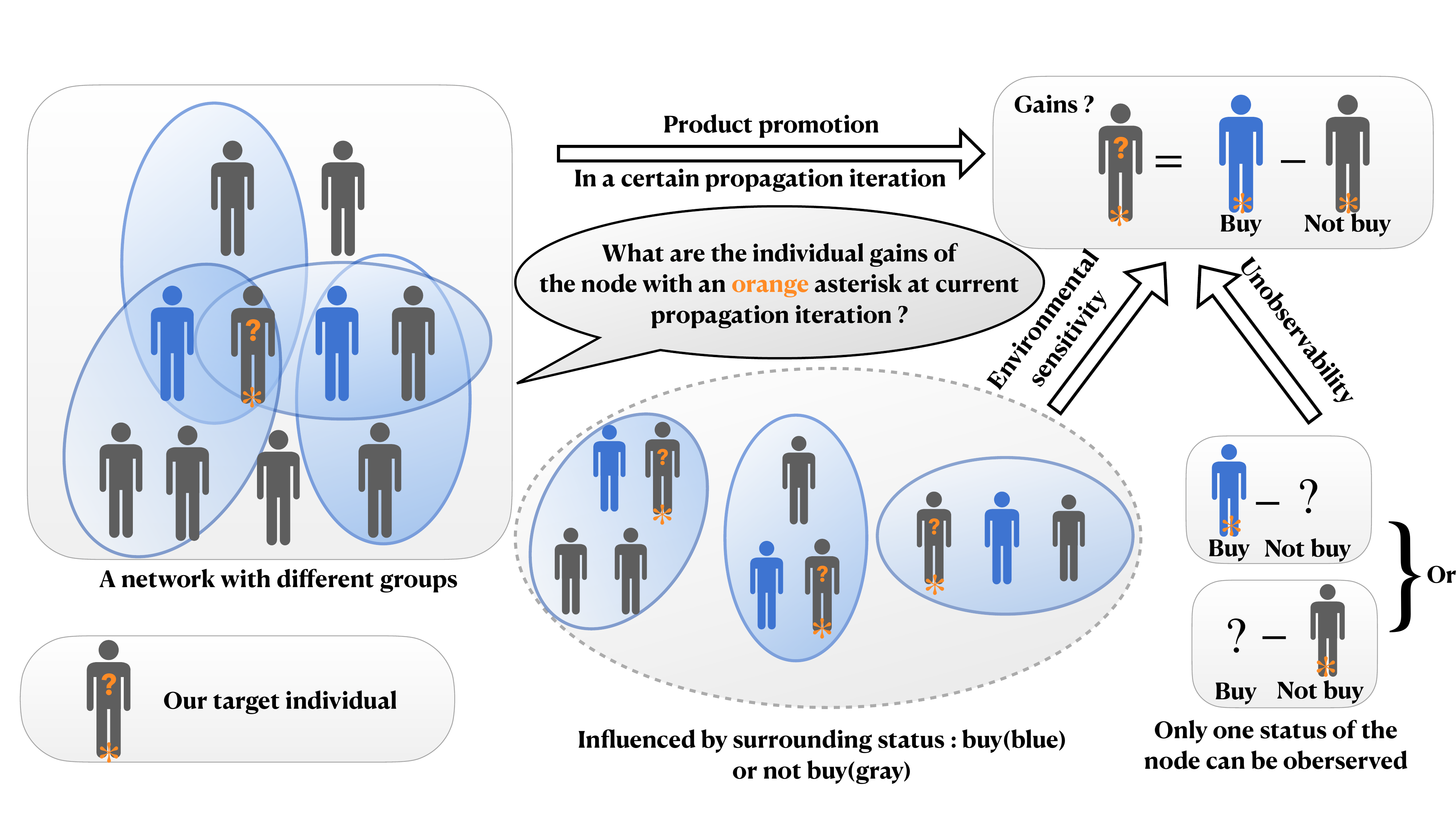}
    \caption{Illustration of individual gains during a certain propagation iteration in product promotion scenario, where we focus on the analysis of the starred (*) node. The current iteration is depicted in the leftmost figure, illustrating nodes in varying states within the network; blue nodes indicate activation for purchases, while grey nodes remain inactive, and $?$ denotes unkonwn status. The actual individual gains is the difference between the profit of the node in the activated state and that in the non-activated state. The first challenge is located on the \textit{unobservability}, i,e, we can only observe one status of these nodes, with the counterfactual scenario unknown. More seriously, the second challenge is \textit{environmental sensitivity}, which indicates the individual gains of the node are affected by the activation status of others.  } 
    \label{intro-exp}
   % \vskip -0.2in
\end{figure*}

Noteworthy, it is not merely a simple causality-plug-in interdisciplinary attempt since we should proactively challenge the Stable Unit Treatment Value Assumption (SUTVA) in causality ~\citep{angrist1996identification}, which permits the presence of interference between different nodes as illustrated above. Taking a step forward, even if there is pioneering exploration upon interference-based causality~\citep{ma2022learning,ma2022clear,leung2022causal}, distinguishing CauIM from such literature, we delve deeper into practical constraints: the original treatment policy can solely influence a ``limited seed set", and then the causality estimator should be considered under a dynamic propagation process with the challenge of heavy time-consuming and instability. Such process could be highly non-trivial since the environment information for each node is dynamic. In this
sense, we further design a fast and stable acceleration framework, replacing the inefficient greedy selection process with vectorized fast derivative operations.

In sum, 
% In conclusion, this issue is 1) succinct and practical, drawing considerable attention; 2) intricate to model.
our contributions are summarized as follows: \textbf{i)} We propose CauIM for networks that consider communities and environmental-sensitive node individual effects to achieve influence maximization and justify its practical scenarios. \textbf{ii)} We show the effectiveness (approximate optimal guarantee), robustness and acceleration feasibility of CauIM framework, especially when ITE is permitted to be not always positive with existing estimation bias. In correspondence, we provide the greedy-based implementation (G-CauIM) and further design Gradient Descent-based Accelerated CauIM~(A-CauIM). \textbf{iii)} We conduct experiments of {CauIM} on three real-world datasets and one synthetic dataset. It not only supports our theoretical claim upon effectiveness and robustness, but also demonstrates the practical efficiency improvement of A-CauIM upon traditional IM.

\section{Preliminary}\label{preliminary}

\paragraph{Notations and Basic Concepts}\label{Notation} We develop our model on an undirected hypergraph $\mathcal{G}(\mathscr{V}, \mathscr{H}, \mathbb{H})$, where $\mathscr{V} := \{v_1,v_2,...v_n\}$, $\mathscr{H} := \{h_1,h_2,...h_m\}$, representing the node set and the hyperedge set respectively. In this context, each undirected hyperedge signifies a social connection of the nodes inside it, and $\mathbb{H}$ denotes the hypergraph structure (adjacent matrix between hyperedge and node). For each node $v_i$, covariate $X_{v_i}$ denotes its node feature and $\mathcal{N}_{{v_i}}$ indicate the set of its neighborhood\footnote{Here $v_i$ and $v_j$ are neighborhoods indicate that they are within at least one hyperedge.}. To maintain clarity, we abuse the notation that $X_{v_i}=X_{i}$ and $N_{v_i}=N_{i}$, applying the same convention for symbols with the subscript $v_i$.  

Stepping forward, we would like to introduce several broad concepts in causal inference: the potential outcome (individual profits of one status) of each node $v_i$ is denoted as $Y_{i}(T_{i}=t; X_{i}, \bm{T}_{-{i}}, \bm{X}_{-{i}})$. Here $t=0,1$ refer to the case where node $v$ is activated in the diffusion process or not\footnote{Here $Y(\cdot)$ is a function of $\{T_{i}; X_{i}, \bm{T}_{-{i}}, \bm{X}_{-{i}}\}$. The first two items refer to the inherited information of node ${v_i}$, and the last two items refer to the environment information. We omit the information $\mathbb{H}$ in the mapping process since it remains stable in this paper. We defer the possibility of dynamic graphs in future research.}. Moreover, $\bm{T}_{-{i}}, \bm{X}_{-{i}}$ refer to the environment information, namely, $\bm{T}_{-{i}} := \{T_{1}, T_{2}, ...T_{{i-1}}, T_{{i+1}},...T_{n}\}$, $\bm{X}_{-{i}} := \{X_{1}, X_{2}, ...X_{{i-1}}, X_{{i+1}}, ...X_{n}\}$. The individual treatment effect (ITE) is defined as
\begin{equation}\begin{aligned}\tau_{ i} := Y_{ i}(T_{ i}=1; X_{ i}, \bm{T}_{-{ i}}, \bm{X}_{-{ i}}) \\- Y_{ i}(T_{i}=0; X_{i}, \bm{T}_{-{i}}, \bm{X}_{-{i}}). \label{causal_effect} 
    \end{aligned} \end{equation}
    % Here $y_i(i)$ denotes the value of $y_i$ when the node $i$ is infected ($i=1$) or not ($i=0$).
    Here ``treatment'' \{$T_{i}, \bm{T}_{-{i}}$\} indicate the activating status of $v_i$ and its surrounding nodes. ITE represents the difference of outputs between the activated and inactivated cases. As illustrated in the introduction, it could not be directly extracted from observations (\textit{property (i)}), and it also depends on the activation state of surrounding nodes (\textit{property (ii)}).

\label{formulation}
\paragraph{Problem Formulation} We consider the Susceptible-Infected Contact Process (SICP) diffusion model, which is well-known in research of hypergraph such as~\citet{xie2022influence}. Initially, we choose a  seed set $S_0$. Then in every iteration of propagation, each activated node $v$ selects one of its affiliated hyperedge randomly and subsequently affects its inactivated neighbors $u\in\mathcal{N}_v$ within this hyperedge with a specific probability. The total probability of $u$ activating $v$, including hyperedge choosing process in one step, is denoted as $p_{uv}$. This process is repeated (Figure~\ref{pic1} in Appendix) until no more nodes are activated. Such extension of the traditional IC model to the hypergraph has potential applicability across various scenarios~\citep{suo2018information}. Noteworthy, we also have the flexibility to select alternative propagation models, e.g., Linear Threshold~(LT)~\citep{granovetter1978threshold, goyal2011simpath}, to which our algorithms described in Section~\ref{methodology} are also applicable.

% In the $t$-th iteration of propagation, $S_t$ denotes the set of activated nodes, the node that has been activated is successively denoted as $S_0 \subseteq S_1 \subseteq S_{2}... \subseteq S_{T}$. Here, the total number of diffusion iterations is limited to $T$, which is a constant which is easily extended. For each node $v \notin S_{i}$, each node $ u \in \mathcal{N}_v$ will attempt to activate $v$ with probability $1-\prod_{u \in \mathcal{N}_{v} \cap\left(S_{i} \backslash S_{i-1}\right)}\left(1-p_{u v}\right)$.
% We construct the standard measure $\Phi(\cdot)$ in IM to evaluate the sum of total causal effects in the diffusion process.
 Here, we define $ap(v_i;S_0)$ as probability of node $v_i$ getting infected in the entire propagation process with seed set $S_0$.  According to~\citet{wang2016efficient}, $ap(v_i;S_0) = \sum_{u \in S_0}p_r(u,v_i)$, where $p_r(u,v_i)$ denotes the probability of reachability from node $u$ to $v_i$ inclusive of all reachable paths. Finally, we identify the objective function as the expected total causal inference during the diffusion process: $\sigma(S) = \mathbb{E} \left[ \sum_{{v_i} \in \mathscr{V}} ap(v_i; S) \tau_{i}\right], S \subseteq \mathscr{V} $. Here $\tau_i$ is identified in Equation~\ref{causal_effect} and the expectation takes upon activation status $\{T_i, \bm{T}_{-i}\}$ in all possible propagation process. In conclusion, our {CauIM} can be formalized as follows:
\begin{equation}
    \begin{aligned}
    S^*=\operatorname{argmax}_{S}\{  \sigma(S) \},
s.t. |S|\leq K.
    \end{aligned}\label{final_object}
\end{equation}
Here $K$ is a constant. Notice that compared to the traditional definition, we are able to obtain such a concise and general expression at the cost of introducing the relatively difficult-to-calculate parameter $ap(\cdot)$. Due to its difficulty on computation, we will elaborate our new efficient computational process in the next section.

\begin{assumption}[Basic assumption] \label{basis_ass}
1) Bounded ITE: The ITE is bounded as $\max_{v_i \in\mathscr{V}} |\tau_i| \leq M$, where $M$ is a constant. 2) Consistency~\citep{cole2009consistency}: The potential outcomes $Y_{i}(T_{i}=t; X_{i}, \bm{T}_{-{i}}, \bm{X}_{-{i}})$ are deterministic in Equation~\ref{causal_effect} and are equal to the observational values of $Y$ for $t =0,1$ given fixed $\{ X_{i}, \bm{T}_{-{i}}, \bm{X}_{-{i}}\}$.

\end{assumption}

These standard assumptions for ITE refer to~\citet{pearl2018book}. Noteworthy, Assumption~\ref{basis_ass} has also been analyzed the interference effect, which is well-known as violating the {SUTVA~\citep{imbens2015causal}}, called as ``the potential outcomes for any unit do not vary with the treatment assigned to other units''. 
% Instead, we aim to disentangle the potential outcome of each node from the information (treatment $\bm{T}_{-i}$ and covariate $\bm{X}_{-i}$) of other nodes. 
Such information is surrogated into the environment function $O_{i} = \text{ENV}(\bH, \bm{T}_{-{i}}, \bm{X}_{-{i}})$. This function follows~\cite{ma2022learning}, where the other covariate and treatment assignment is summarized as a low-dimensional vector. We formalize it into the Assumption~\ref{env_ass}~\citep{ma2022learning}.

% . They constructed this function to transfer the causality-based intervention problem of $y_i(1), y_i(0)$ to the traditional correlation-based mathematical problem. 

\begin{assumption}[Environment assumption~\citep{ma2022learning}] \label{env_ass} For each node $v_i$, these two potential outcomes in Equation~\ref{causal_effect} are independent with each other given $\{T_{i},O_{i}\}$.
% 2) {Expressiveness}: function $Y_i(\cdot)$ in Equation~\ref{causal_effect} is deterministic given $\{T_i=0; X_i, \bm{T}_{-i}, \bm{X}_{-i}\}$.

Assumption~\ref{env_ass} extends the {ignorability assumption~\citep{joffe2010selective}} to the graphical case, where authors claimed that the pair of potential outcomes is independent of the treatment assignment, given the covariates of each node. It just guarantees there is no unmeasured covariates in the graph, which is fairly broad and has been adopted by~\citet{ma2022learning}. Assumption~\ref{env_ass} essentially ensures that the two effects ($\{X_{i},T_{i}\}, \{\bm{X}_{-{i}},\bm{T}_{-{i}}\}$) together constitute a sufficient statistic for ITE. In other words, ITE could be legitimately estimated via observations under these assumptions and hence provides the potential to design the estimator \footnote{Under these two assumptions, it has been demonstrated that the expected potential outcome of $v_i$ could be computed by observational data~\citep{ma2022learning}, namely, $\bE (Y_{i}(T_{i}=t; X_{i}, \bm{T}_{-{i}}, \bm{X}_{-{i}}) )= \bE(Y_{i} \mid X_{i} = x_{i}, T_{i} = t, o_{i} ), t \in \{0,1\}$.}. 
% In fact, we will design a generalized framework instead of following the approach of~\cite{ma2022learning}, which will be illustrated in the future part. 

\end{assumption}

% \begin{table}[h]
%     \centering
%     \fontsize{8.2}{9.0}\selectfont
%     \caption{Notations in our paper. }
%   \scalebox{1}{
%     \begin{tabular}{c c}
    
%       \hline
%       Symbol   &  Descriptions \\
%       \hline
%       $\mathcal{G}(\mathscr{V}, \mathscr{H})$ & hypergraph (node set $\mathscr{V}$, hyperedge set $\mathscr{H}$)\\
%          $v_i$, $h_j$ & node $i$ ,hyperedge $j$  in hypergraph $\mathcal{G}$: $v_i \in \mathscr{V}$,  $h_j \in \mathscr{H}$\\
       
%     %   $\mathbb{H}$ & the hypergraph structure matrix\\
       
%       $m,n$ & the indices of nodes and hyperedges \\
       
%       $\mathcal{N}_{i}$ & the set of neighbours $\mathcal{N}_{i}$ of node $v_i$\\
%       $X_i$ & the covariate of node $v_i$\\
%     %   $\mathcal{N}_{e}$ & the set of nodes in hyperedge ${h_e} \in \mathscr{H}$ \\
       
%     %   $\mathscr{H}_{i}$ & the set of hyperedges containing the  node $v_i$\\
       
%       $ap_i({v,S})$ & diffusion probability from set $S$ to the node $v$\\
       
%       $S_i$ & the active nodes in each iteration (seed: $S_0$)\\
%       $y_i(1), y_i(0)$ & the potential outcome of each node $v_i$ \\
%       $\tau_i $ & the ITE of each node $v_i$ \\
       
%       $\hat{\tau}_i$ & the estimation the ITE of each node $v_i$ \\
       
%     %   $p_{uv}$ & the activating probability between nodes $v_u,v_v$\\
%       \hline
       
%     \end{tabular} }
%     \label{tab:my_label}
% \end{table}
% \zhiheng{check it again!!.}

% In addition, We adopt 

\section{Methodology}\label{methodology}

In this section, we first identify three primary challenges when achieving our algorithms. Subsequently, we provide a detailed introduction to two algorithms within our CauIM framework. The first one is offline greedy-based implementation for causal influence maximization named G-CauIM (Algorithm~\ref{alg1: traditional_cauim}). We then improve the efficiency of G-CauIM by speeding up the diffusion and greedy selection process, and propose a Gradient Descent-based Accelerated CauIM, referred to as A-CauIM (Figure ~\ref{alg-gd}).  Finally, we theoretically demonstrate the algorithm's effectiveness. To begin with, we summarize three main challenges as follows.

% Both algorithms comprise node weight estimation and policy learning.

% \zhiheng {In this section, we first formulate the causal IM problem and then propose out new \textbf{Cau}sal \textbf{I}nfluence \textbf{M}aximization algorithm called {CauIM}. Subsequently, we enhance the speed of CauIM by introducing acceleration based on the original greedy strategy, further propose \textbf{G}radient \textbf{D}escent-based \textbf{A}ccelerated CauIM named A-CauIM. Both algorithms contain three core parts: 1) ITE estimation, 2) Diffusion process, and 3) Greedy selection. }

% \zhiheng{should emphasize the generalibility of this model.}

\subsection{Three Challenges}
(i) Unmeasured individual effect(ITE) (Equation~\ref{causal_effect}): the inherent limitations in causal inference necessitate the recovery of the counterfactual to address the “missing data problem” in our objective function Equation~\ref{final_object}. Furthermore, it may vary across iterations due to its dependence on environmental information ${\bm{X}_{-{i}}, \bm{T}_{-{i}}}$. (ii) Approximate optimal guarantee: The traditional greedy-based IM might not guarantee sub-optimal properties due to the unknown individual effect as mentioned above and, therefore, requires re-analysis. (iii) Estimation bias: CauIM exhibits robustness against biases stemming from individual effect estimation and the sampling strategy.

% \iffalse Thus, the difference between the control group and treatment group is reduced and the outcome of missing data can be calculated in one neural network.  \zhiheng{I think about some improvements...}\fi
\begin{algorithm}[ht]
  \caption{G-CauIM}
  \label{alg1: traditional_cauim} 
  \KwIn{$\mathcal{G}(\mathscr{V}, \mathscr{H}, \mathbb{H})$; seed number $K$; $X_{i}$,  initial treatment $T_{i}$  and $\bm{T}_{-{i}}$ of each node ${v_i}$; observational data $D=\{Y_{i}(t; \cdot)\}_{v_i \in \mathscr{V}}$, where $t = T_i$; {the bound} $M$ for ITE.}
  \KwOut{Deterministic seed set $S^*$ with $|S^*| = K$.}
 
 \SetKwProg{Fn}{Function}{:}{end}
    \Fn{$\widehat{\rm{ITE}}$($X_{i}$, $T_{i}$,$\bm{T}_{-{i}}$,$\mathcal{G}; \theta$)}{
        %\For{$i <|\mathscr{V}|$}{
        Compute the representation $Z_{i}$ of $X_{i}$ via  representation learning; 

        Compute the high-order interference representation $O_{i} := \text{ENV}(\mathbb{H}, \bm{T}_{-{i}}, \bm{Z} _{-{i}}; \theta)$ (Assumption~\ref{basis_ass} and Assumption~\ref{env_ass} );
        
        Concatenate $Z_{i},~Q_{i}$ and feed them into a Multi-Layer Perception (MLP): $\{\hat{Y}_{i}(1;\cdot), \hat{Y}_{i}(0;\cdot)\} \sim {\rm{MLP}_{}}([Z_{i} || O_{i}])$;
        %}
         
         Compute the ITE  $\hat{\tau}_{i}$ = $\hat{Y}_{i}(1; \cdot) - \hat{Y}_{i}(0; \cdot)$ for $v_i$;
         
        % \Return{$\bm{\hat{\tau}}=\{\hat{\tau}_i,~i=1,\cdots,|\mathscr{V}|\}$}
        \Return{$\hat{\tau}_{i}$;}
        
    }

    \Fn{${\textbf{Main}}$}{
    (Initialization) $S^* = \emptyset$;  $\rm{Loss} = 0$;
    
    % Train (offline) the overall neural network model of $\widehat{\rm{ITE}}$ with observational data $D$;

 { (Training): Compute the cumulative loss by $D$: $\text{Loss} = \sum_{v_i \in \mathscr{V}, t=0,1}\big|\big(\hat{Y}_{i}(t;\cdot) - {Y}_{i}(t;\cdot)\big)\mathbb{I}(T_{i} = t)\big| $ via the above $\widehat{\rm{ITE}}(\cdot; \theta)$ function;
 
 Get the optimal $ \theta^{opt}:= \arg\min \{\text{Loss}: \forall v_i, \hat{\tau}_{i} \leq M\}$;
  
  % $\forall {v_i}$, achieve ${\tau}_{v_i}$ from  \textit{$\widehat{\rm{ITE}}$($\bm{X}_{v_i}$, $t_{v_i}$,$\bm{T}_{-{v_i}}$,$\mathcal{G}$)};
}
  
  \For{$|S^*|<K$}
  {
    Conduct propagation under current seed set $S$, generate ${\hat{\tau}}_{i}$ = \textit{$\widehat{\rm{ITE}}({X}_{i}$, $T_{{i}}$,$\bm{T}_{-{i}}$,$\mathcal{G}; \theta^{'})$} for $v_i \notin S$, where $T_{i}$ is changed to its current activated state( 0 or 1), and $\bm{T}_{-{i}}$ is changed based on other nodes' activated states, $\theta^{'}:= \theta^{opt}+\triangle_\theta, \triangle_\theta:=\min\{\|\theta_q\|: \exists \delta \leq \|\theta_q\|, \widehat{\rm{ITE}}(\cdot; \theta+\delta) \leq M \}$, repeat the process and get the mean;
    % where $\theta = \theta + \frac{\partial}{\partial}$.

  % \textit{(Compared to previous iteration: for ${v_i}$, $T_{i}$ is changed; for $v_j, j\neq i$, $T_{-{j}}$ is changed)};

   $v_0 = argmax_{v \notin S^*}\left\{\sigma \left(S^*\cup{\{v\}}\right)-\sigma(S^*)\right\}$; 
   
   $S^*=S^*\cup{\{v_0\}}$;
   
  }

  %\If(\tcp*[h]{not aligned comment}){$D > n$}
    }
    \Return $S^*$.
  
\end{algorithm}
\noindent \textit{Address challenge (i)}: ITE estimation. Motivated by~\citet{ma2022learning, ma2021causal}, we recover the individual ITE from observational data using a neural network model represented by $\widehat{\rm{ITE}}(\cdot)$ function in both G-CauIM and A-CauIM. To handle dynamic characteristics of ITE, we incorporate a model parameter  adjustment strategy into Algorithm~\ref{alg1: traditional_cauim} of G-CauIM. Additionally, we employ an approximation strategy in A- CauIM, as depicted in the ITE Estimator section of Figure~\ref{alg-gd}.  Further details of this procedure are provided in the subsequent Algorithms section.

\noindent \textit{Address challenge (ii)}: To achieve approximate optimum, we inductively select the seed candidate via greedy search. We show this greedy strategy will also hold a weaker but analogous order of ($1-\frac{1}{e}$) approximate optimum level of the traditional IM in further theoretical parts: 
\begin{equation}
    \begin{aligned}
    v_0 = argmax_{v \notin S^*}\left\{\sigma \left(S^*\cup{\{v\}}\right)-\sigma(S^*)\right\}.  
    \end{aligned}\label{greedy_alg}
\end{equation}

 It can be seen as a more general result since the traditional IM serves as the special case with $\tau = 1$. However, the analysis of the above settings is harder since the submodularity might not always exist due to the potential negative ITE. Additionally, for practical issues, We refer readers to Appendix~\ref{related} for the Monte Carlo-based greedy {CauIM}.

\noindent \textit{Address challenge (iii)}: Our CauIM model demonstrates robustness against bias in estimating Individual Treatment Effects (ITE), which will be outlined in the theoretical discussion. This resilience is ensured under broad assumptions regarding the controlled probability $p_r(v,u)$ of reaching all nodes $u\in\mathscr{V}$ in the complete graph from the current propagation node $v$, which is readily feasible in real-world scenarios, as evidenced in the experimental findings. In a word, this property provides us with more flexibility and possibilities in adjusting the estimator parameters.

% \zhiheng{add a sentence: but the time-consuming is high...so we further develop an acceleration technique, which is applicable in general IM cases..}

%\subsection{G-CauIM and A-CauIM Algorithm}
%\zhiheng{add one sentence to summarize.}

\subsection{Proposed Algorithms}
\paragraph{G-CauIM.}Our primary procedures are executed in function $Main$ (line {7}). We initially train the ITE estimation model offline (line {9}), represented by function $\widehat{\rm{ITE}}(\cdot)$ (line~{1}). In this model, for node $v_i$, we construct $Z_{i}$ via each covariate $X_{i}$ using representation techniques (line {2}). Additionally, we construct $O_{i}$ to denote the higher-order interference representation of node $v_i$  with its environments: $O_{i}=\text{ENV}(\mathbb{H},\bm{T}_{-{i}}, \bm{Z}_{-{i}})$ (line {3}). Here $\text{ENV}$ is a transmission function using Hypergraph Convolution module~\citep{bai2021hypergraph} with more details in Appendix~\ref{related}, and $\bm{Z}_{-{i}}:= \{Z_{1},...Z_{{i-1}}, Z_{{i+1}}, ...Z_{n}\}$. Finally, with the combination of representation $Z_{i}$ and  $O_{i}$, we obtain the estimation value $\hat{Y}_{i}(1), \hat{Y}_{i}(0)$ via $\text{MLP}$ model (line {4}). During training, we employ a balancing mechanism to ensure covariate balance between the treatment group $\{v_{i}: T_{i}=1\}$ and control group $\{v_j: T_{j}=0\}$, achieved by incorporating an additional penalty term to the representation vector.  Such technique is not unique, referring to ~\citet{yao2018representation, harshaw2024balancing}.  In addition, line {10} is to ensure the estimated ITE is bounded by $M$ (identified in assumption~\ref{basis_ass}) via controlling the estimator parameter $\theta$. Diffusion process and greedy selection take place in line {11-14}, where we employ the traditional greedy algorithm strategy and re-analyze the marginal benefit function as the incremental form of ITE. It is noteworthy that as the seed set expands, the activated states of each node change in the propagation, leading to diverse values of $\hat{\tau}_i$. To bound the varying $\hat{\tau}_i$, we  adjust the parameters $\theta^{opt}$ of the trained $\widehat{\rm{ITE}}(\cdot)$, as depicted in line 12.

% Furthermore, our algorithm is designed to be modular. This means that we can extend the framework to accommodate sketch-based algorithms (RIS-CauIM~(Algorithm~\ref{alg_ris})), heuristic algorithm (HDD-CauIM~(Algorithm~\ref{alg_hdd})), etc. We defer details in Appendix~\ref{related}. 

\begin{figure}[t]

    \centering
    
    \includegraphics[width = 8cm]{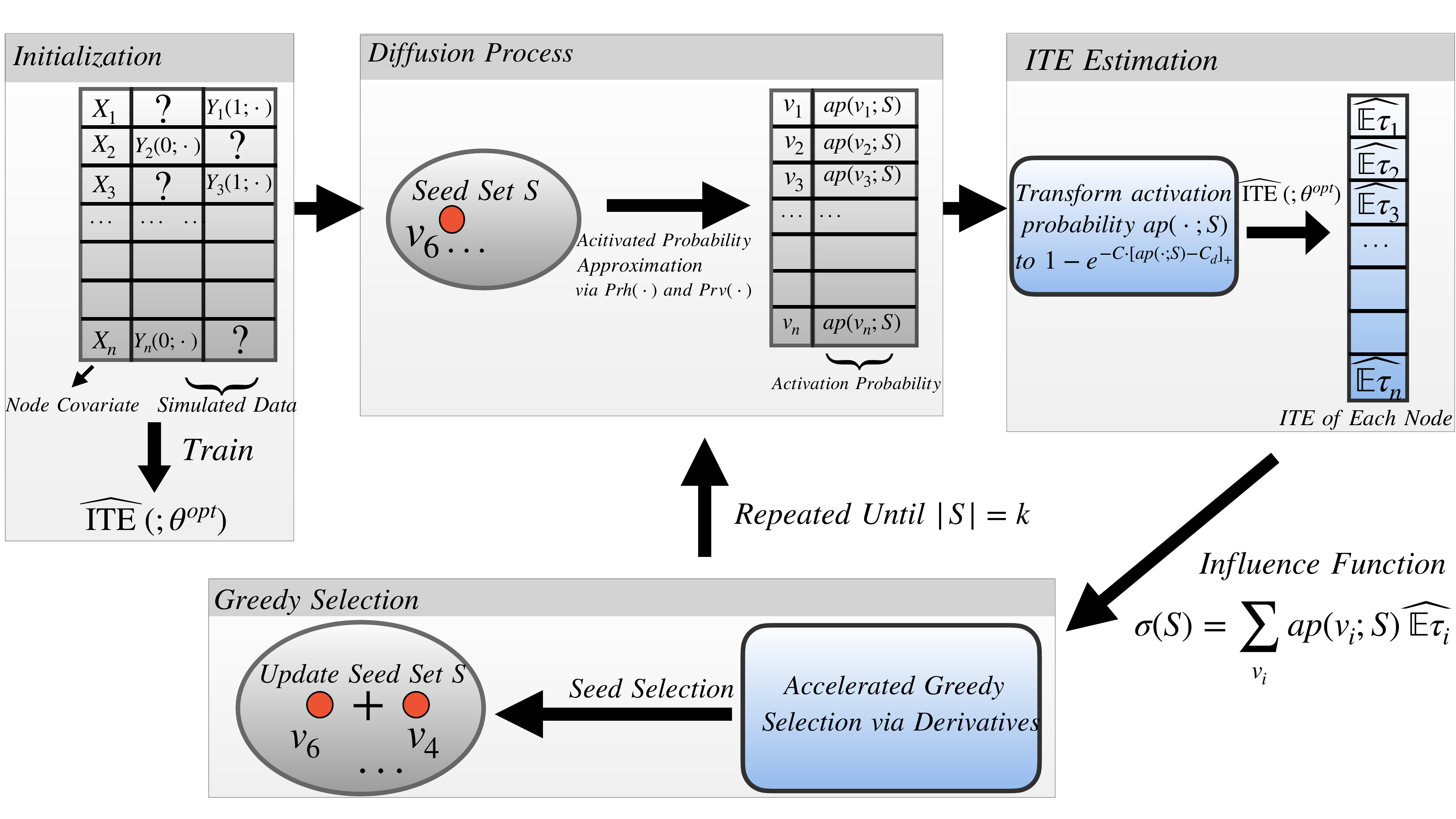}
    
    \caption{{A-CauIM. Compared with G-CauIM (Algorithm~\ref{alg1: traditional_cauim}), we add a storage table for activation probabilities $ap(;)$ and then simplify the complex greedy selection (Equation~\ref{greedy_alg}) into more efficient derivative operations (Equation~\ref{dev}). In addition, we  we transform $ap(;)$ into continuous values closing to $0,1$ to signify the activated states $T_i$ of each node on average. And by this procedure, we obtain $\widehat{\mathbb{E}{{\tau}_{i}}}$ which is the approximation of the expectation on unobserved ${\tau}_i$.   }}
    \label{alg-gd}
    % }
\end{figure}

% Ideally, we hope A-CauIM would utilize Torch to differentiate the objective function (Equation 3) to identify the node that maximizes marginal gains. However, it is usually unreliable since the indices of nodes are discrete and not amenable to differentiation. To this end, during each round, we iteratively approximate the ("parameter-based continuous") activation probability of each node (as illustrated in Figure 3). Specifically, after performing computations based on Equation 4, we derive an asymptotic approximate version of $ap(v_q; S_i)$ (the "parameter-based continuous" activation probability as above). Here $ap(\cdot)$ is defined as "the diffusion probability from the node-set $S_i$ to the node $v_q$ in the whole propagation.

\paragraph{A-CauIM.}Four components are presented in Figure~\ref{alg-gd}. Initialization is to obtain a trained $\widehat{ITE}(;\theta^{opt})$, the same as that of G-CauIM. In the diffusion process, our objective is to calculate  $ap(\cdot;)$. To address the inefficiency loops of G-CauIM, we enhance the computation by utilizing Torch for fast graph computation. In sum, during each round of seed selection, we initially utilize a bipartite graph, where two sides are constructed by hyperegdges and nodes, to approximately compute the activated probability $ap(v_i;S)$ of each node $v_i$ under $S$ iteratively. Specifically, during $j$-th iteration of propagation (given current seed set $S$), according to the SICP model mentioned in Section~\ref{preliminary}, suppose that hyperedge $h_p$ is chosen, and its internal node $v_q$ is activated. The activation probability of $h_p$ and $v_q$ in $j$-th iteration are computed as $Prh({h_p,j}) $ and $Prv({v_q,j})$, respectively, as defined in Equation ~\ref{prob-cal}. Here $\mathcal{H}_q$ denotes the set of hyperedges containing $v_q$, $P_{SICP}$\footnote{Here $P_{SICP}$ could be seen as the same constant for each node~\cite{xie2022influence, tang2015influence} and it is easily extended to the cases where nodes are attributed to different activation probability.} represents the basic activation probability of a node. In this sense, we derive $ap( v_q; S) = \lim_{n\rightarrow +\infty} Prv({v_q,n})$. This process provides a rapid approximation of the original multiple randomizations of propagation. 
\begin{equation}
	\begin{aligned}
&Prh({h_p,j}) = \sum_{v_k \in h{p}}{Prv({v_k,j-1})}/{|\mathcal{H}_k|},\\
&Prv({v_q,j}) = 1-\Pi_{h_p \in \mathcal{H}_q}(1-P_{SICP})^{Prh({h_p,j})}
	\end{aligned} \label{prob-cal}
\end{equation}

The next step is ITE estimation. We convert $ap( v_i; S)$  into values $1-e^{C[ap(.;S)-C_d]_+}$ to represent the activated states $T_i$ of each node on average according to the obtained probabilities. Here $C, C_d$ are a priori constants. It aims to be close to binary treatment $0,1$ to fit the $\widehat{ITE}$ model and maintain the differentiability. Using $ITE(;\theta^{opt})$, we determine $\widehat{\mathbb{E} \tau_{i}}$. For the Greedy selection process, we calculate the marginal gain using $\sigma(S) = \sum_{v_i \in \mathscr{V}} ap(v_i, S)\widehat{\mathbb{E} \tau_{i}}.$ to approximate $\mathbb{E}\left[\sum_{v_i \in \mathscr{V}} ap(v_i, S){\tau_{i}}\right].$ aforementioned in Section~\ref{formulation}. Ideally, we hope A-CauIM would utilize Torch to differentiate the objective function (Equation~\ref{greedy_alg}) to identify the node that maximizes marginal gains. However, it is usually unreliable since the indices of nodes are discrete and not amenable to differentiation. To address this issue,  we use the asymptotic approximate version of $ap(v_q; S)$ (which is "parameter-based continuous") to replace the indicator of ${0, 1}$ on each node. On this basis, Equation~\ref{greedy_alg} has been transformed into an operation that is continuously differentiable by Torch, despite the cost of losing some information in unused diffusion process.
\begin{equation}
	\begin{aligned}
		 v := \arg\max_{v\notin S}\Big\{\dfrac{\partial\left({\sigma(S )}\right)  }{\partial\left( ap(v; S)\right)}*ap(v; S) \Big\}.
	\end{aligned} \label{dev}
\end{equation}

%\zhiheng{ transform $ap()$ to $(0,1)$, offline model $E\tau_i$, $\sigma(S_i)$. }

%\zhiheng{$\tau_i$ is fixed and then make sense.}
Finally, in each round, we select the node with the highest derivative value multiplied by its activated probability as the seed node in Equation~\ref{dev}, which exerts the greatest impact on ${\sigma(S)}$ under a small perturbation to the connection probability between $v$ and $S$ in the whole propagation. Importantly, we have reduced the complexity of this problem in our settings from $O(KRnm)$ to $O(Km\mathbb{E}_{h \in \mathscr{H}}|h|)$, where $m,n$ is the number of hyperedges and nodes , $K$ is seed set number identified in our preliminaries, and $R$ is simulation number of propagation process (The complexity of the ITE estimation module, utilized in both algorithms, is excluded here). Such improvement is especially significant on relatively sparse graphs.

% \xinyan{Appendix:\\

% }
% In the theoretical section, we will elaborate on how this approach effectively addresses \textbf{challenge 3)} by demonstrating its robustness in the ITE estimation process.

% \textbf{Other IC model on hypergraph.}

% \textbf{Why choose sum of ITE as objectives.}

% \textbf{Other causal representation learning method except majing's}

\iffalse
\begin{algorithm}[t]
\caption{\textbf{ITE recovery}}\label{ITE_est}
\begin{algorithmic}[1]

\REQUIRE feature $X_i$ , treatment $T_i$  and environmental information $T_{-i}$ of each node i, Hypergraph $\mathscr{G}(\mathscr{V}, \mathscr{H})$.
\STATE $\{X_1, \cdots, X_{|\mathscr{V}|}\}$ = Representation learning via \cite{ma2022learning}.
\FOR{$i <|\mathscr{V}|$}
\STATE $Z_i$ = Balancing($X_i$)
\STATE $P_i= \Psi(Z_i, \mathscr{G}, T_{-i}, t_i)$
\STATE $Q_i$ = Balancing($P_i$)
\STATE MLP using $Z_i,~Q_i$ output $y_i(1)$ and $y_i(0)$
\STATE $\tau_i$ = $y_i(1)$ - $y_i(0)$
\ENDFOR 
\ENSURE The individual causal effect $\tau_i$ on each node.\zhiheng{need revise.}
\end{algorithmic}

\end{algorithm}
\fi

%Bibliography

\section{Theoretical analysis}\label{theory}

In this section, we first prove that the traditional greedy algorithm could be naturally extended to the hypergraph and remains the approximate optimal guarantee. Then we demonstrate that this approximate optimal guarantee still holds for our {CauIM} algorithm (\textbf{Theorem.~\ref{theorem_main}, challenge 2}). In addition, we show that CauIM's performance is robust to the estimation error of ITE (\textbf{Theorem.~\ref{robust_thm}, challenge 1 and challenge 3}). \footnote{\citet{antelmi2021social, zheng2019non} claimed their hypergraph does not contain submodularity. However, their hypergraph is special. They define the hyperedge as $(H , t)$\footnote{\citet{antelmi2021social} considered the special directed hypergraph. In $(H, t)$, $H$ is the set of nodes, and $t$ is the single tail node.\iffalse \zhiheng{revised}\fi}. Further, \citet{gangal2016hemi} demonstrated the submodularity of the general hypergraph. \cite{wang2016efficient} proposed the new weighted influence maximization problem. However, their attributes corresponding to each node is a priori assumed to be negative, which is different from general ITE (ITE can be positive or negative). Moreover, \citet{erkol2022effective} stated that the submodularity on the temporal network might not be held.}

\begin{proposition}\label{prop_np}
Our {CauIM} problem \iffalse with CM \fi model is NP-hard.
\end{proposition}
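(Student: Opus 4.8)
The plan is to establish NP-hardness by a polynomial-time reduction from the classical Influence Maximization problem under the Independent Cascade (IC) model, which Kempe et al.~\cite{kempe2003maximizing} proved NP-hard via Set Cover / Vertex Cover. Since NP-hardness concerns decision problems, I would first state the decision version of CauIM: given a hypergraph, diffusion probabilities, a budget $K$, and a threshold $B$, decide whether there exists a seed set $S_0$ with $|S_0| = K$ and $\sigma(S_0) \geq B$. The aim is then to show that every IM instance embeds into this problem as a degenerate special case, so that CauIM is at least as hard as IM.

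First I would exhibit the specialization. Given an IM instance on an ordinary graph $G = (V, E)$ with edge probabilities $\{p_{uv}\}$ and budget $K$, I construct a CauIM instance by viewing $G$ as a $2$-uniform hypergraph, i.e. each edge $\{u,v\} \in E$ becomes a hyperedge of size two, keeping the same activation probabilities, leaving $K$ unchanged, and assigning the unit individual treatment effect $\tau_i = 1$ to every node $i \in \mathscr{V}$. This transformation is clearly computable in time linear in the size of $G$, so it is a valid polynomial-time reduction.

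Next I would verify that the two models agree. On a $2$-uniform hypergraph the neighbourhood $\mathcal{N}_v = \bigcup_{e \in \mathcal{E}_v}\{j \in \mathcal{N}_e\}$ coincides with the ordinary graph neighbourhood, so the hypergraph activation rule $1-\prod_{u \in \mathcal{N}^{v} \cap (S_t \setminus S_{t-1})}(1-p_{uv})$ reduces exactly to the standard IC activation rule, and the two diffusion processes induce the same distribution over the final active set $S_n$. Moreover, with $\tau_i \equiv 1$ the causal objective collapses to $\Phi(S_n) = \sum_{k \in S_n}\tau_k = |S_n|$, hence $\sigma(S_0) = \bE(\Phi(S_n)) = \bE(|S_n|)$ is precisely the expected influence spread of the IM instance. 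Consequently a seed set meets the CauIM threshold $B$ if and only if it meets the identical IM threshold, so the reduction preserves yes/no answers and CauIM is NP-hard.

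I expect the reduction itself to be routine; the one point warranting care, and hence the main obstacle, is confirming that the hypergraph generalization of the IC model collapses \emph{exactly} to the classical IC dynamics on the degenerate $2$-uniform instance, namely that no spurious coupling is introduced by the ``choose a hyperedge, then spread within it'' mechanism when every hyperedge has size two. Once this equivalence of diffusion laws is pinned down, equality of the expected objectives, and therefore of the decision thresholds, is immediate.
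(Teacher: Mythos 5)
Your proposal is correct and follows essentially the same route as the paper's first argument: specialize CauIM by setting $\tau_i = 1$ for all nodes and restricting to hyperedges of size two, so that it degenerates to classical IM under the IC model, which is NP-hard by Kempe et al. Your write-up is more careful than the paper's sketch (explicit decision version, polynomial-time check, and verification that the hypergraph diffusion law collapses exactly to IC dynamics), but the underlying reduction is identical; the paper additionally offers a second, independent argument via weighted set cover, which you do not need.
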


We refer readers to Appendix.~\ref{proof_np}. Due to the Influence Maximization (IM) problem itself being NP-hard, our CauIM can be naturally reduced to the traditional IM problem ($\tau = 1$) and is therefore also NP-hard and more complex than the traditional problem.

\begin{lemma}[Approximal optimal guarantee of greedy IM on hypergraph]\label{appro_hyper_lemma}
The Greedy method on the \textbf{hypergraph}  can achieve the $(1-\frac{1}{e})$ approximate optimal guarantee.
\end{lemma}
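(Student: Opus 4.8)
The plan is to reduce the statement to the classical analysis of Kempe, Kleinberg and Tardos by showing that the influence function $\sigma(\cdot)$ on the hypergraph is \emph{monotone} and \emph{submodular}, since once these two structural properties are established the $(1-\frac{1}{e})$ guarantee follows from the standard Nemhauser--Wolsey--Fisher theorem on greedy maximization of a monotone submodular set function subject to a cardinality constraint. For this lemma I would temporarily set $\tau_i = 1$ for all $i$ (so that $\sigma$ counts infected nodes, matching the traditional objective), deferring the genuinely causal/weighted generalization to the subsequent theorem; the point here is purely that the hypergraph diffusion model does not destroy submodularity.

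The key technical device is the \emph{live-edge} (equivalently, possible-worlds) reformulation of the independent cascade process adapted to hyperedges. First I would fix the source of randomness: for each ordered activation attempt along a pair $(u,v)$ with probability $p_{uv}$, flip an independent coin in advance and declare the corresponding edge ``live.'' One then argues that the set of nodes eventually activated starting from seed set $S_0$ equals the set of nodes reachable from $S_0$ in the resulting deterministic live-hyperedge graph. This requires checking that the hypergraph spreading rule stated in the Preliminary---where an infected node selects a containing hyperedge and then attempts to activate susceptible members with probability $1-\prod_{u}(1-p_{uv})$---admits such a decoupling into independent coin flips whose joint law is unaffected by the order of processing. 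Conditioning on a fixed outcome $g$ of all coins, let $R_g(S_0)$ be the set reachable from $S_0$; then $|R_g(\cdot)|$ is manifestly monotone, and it is submodular because reachability in a fixed directed (hyper)graph is a classical coverage-type function: $R_g(S_0)=\bigcup_{s\in S_0} R_g(\{s\})$, and a union-of-reachable-sets function satisfies the diminishing-returns inequality. Since $\sigma(S_0)=\mathbb{E}_g[|R_g(S_0)|]$ is a nonnegative convex combination of monotone submodular functions, it is itself monotone and submodular.

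With monotonicity and submodularity in hand, the concluding step is to invoke the greedy approximation theorem: greedily adding the element of maximum marginal gain $v=\arg\max_{v\notin S_0}\{\sigma(S_0\cup\{v\})-\sigma(S_0)\}$ at each of the $K$ rounds yields a set whose value is at least $(1-\frac{1}{e})$ times the optimum. I would also note that $\sigma$ is computed here by Monte Carlo estimation (Algorithm~\ref{MC}), so strictly one obtains a $(1-\frac{1}{e}-\epsilon)$ guarantee under sufficiently many simulations; for the clean statement I would treat $\sigma$ as exactly evaluable and remark on the estimation error separately.

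The main obstacle I anticipate is justifying the live-edge decoupling in the \emph{hyperedge} setting rather than merely asserting it. Unlike the ordinary-graph cascade, here an infected node first chooses a hyperedge and then acts on its members, so I must verify that this two-stage rule either reduces to independent pairwise activation coins $p_{uv}$ (so that the stated activation probability $1-\prod_{u\in\mathcal{N}^v\cap(S_t\setminus S_{t-1})}(1-p_{uv})$ is exactly reproduced) or else construct the live-edge model directly at the level of hyperedges. The subtle point is order-independence: I need the final activated set to have the same distribution regardless of the sequence in which hyperedges and activation attempts are resolved, which is what legitimizes fixing all coins up front. Once that equivalence is pinned down, submodularity is routine; if the equivalence fails under the paper's precise propagation semantics, one would instead have to prove submodularity of $\sigma$ directly via a coupling argument comparing the spread from $S_0\cup\{v\}$ and $S_0\cup\{v,w\}$, which is considerably more delicate.
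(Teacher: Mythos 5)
Your proposal is correct and its skeleton matches the paper's: set $\tau_i = 1$ so the objective is the traditional count, observe that monotonicity is then immediate, establish submodularity of $\sigma$ on the hypergraph, and invoke the classical greedy approximation theorem. The genuine difference is in how submodularity is obtained. You adapt the live-edge (possible-worlds) argument directly to the hypergraph diffusion, which forces you to confront---as you correctly flag---whether the two-stage rule (an infected node selects a containing hyperedge, then attempts its susceptible members) decouples into order-independent coin flips. The paper sidesteps this entirely by following \cite{gangal2016hemi}: it passes to the augmented (star-expansion) graph $V^{aug} = \{\mathscr{V} \cup \mathscr{H}, E\}$ with edges $e := (v,h)$ for $v \in \mathscr{V}$, $h \in \mathscr{H}$, $v \in h$, so that hypergraph diffusion becomes ordinary graph diffusion on a bipartite graph, where the standard submodularity analysis of \cite{kempe2003maximizing} applies verbatim. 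Each route has a fidelity obligation that the other avoids: the paper's construction spares you from re-justifying order-independence under hyperedge semantics, but quietly assumes the star expansion (with suitable probabilities on node-to-hyperedge and hyperedge-to-node edges) reproduces the same activation law as the model stated in the Preliminary; your direct route avoids that reduction but must verify the decoupling itself. Your explicit identification of this issue is, if anything, more careful than the paper's treatment---the paper itself records that \citet{antelmi2021social, zheng2019non} found submodularity failures for related hypergraph models, so the question you raise is precisely where such proofs can break, and your proposed fallback (a direct coupling argument) is the right contingency.
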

\begin{proof}[Proof of sketch]
It is equivalent to consider CauIM with $\tau_i = 1, i\in \mathscr{V}$. This approximate optimal guarantee is due to two elegant properties: 1) monotonicity and 2) submodularity. Firstly, according to $ \sigma(S_0 \cup v) - \sigma(S_0) \geq 0$ when $\tau_i = 1$, the monotonicity natrually holds. Secondly, we consider the submodularity in the hypergraph. This part has been proved by~\cite{gangal2016hemi}, where they constructed an argumented graph $V^{aug} = \{\mathscr{V} \cup \mathscr{H}, E\}$. Here the edge is indentified as $ e:=(v, h), v \in \mathscr{V}, h \in \mathscr{H}, v \in h$. Then conduct the same submodularity analysis as in the traditional graph.

\end{proof}

\begin{condition}[boundedness increasement of reachable probability]\label{bound_probability}
We define the compherehensive reachable probability from a set (node) $v_1 \in \mathscr{V}$ to a set (node) $v_2 \in \mathscr{V}$ as $p_{v_1 v_2}$. $\forall v_1 \subseteq v_2 \subseteq V, |v_2| = |v_1|+1$, we have the bounded condition of the increasement of the reachable probability: $\forall v^{\prime} \in \mathscr{V}, |p_{v^{\prime} v_1} - p_{v^{\prime} v_2}| \leq \epsilon_1$.  Moreover, $\max_{v \in \mathscr{V}}\sum_{v_i \in R(v)} |\tau_{v_i}| p_{v v_i} \leq \epsilon_2$, where $R(v)$ denotes the successors that $v$ can arrive within $n$ steps. Here $\epsilon_1, \epsilon_2$ are both a priori constants.
\end{condition}

Notice that this condition is fairly broad and model-free.

\begin{theorem}[Approximate optimal guarantee of CauIM]\label{theorem_main}
1) If $\tau_i >0, i \in \mathscr{V}$, the {CauIM} algorithm can achieve the $(1-\frac{1}{e})$ optimal approximate guarantee. 2) If we do not have $\tau>0$, then a more generalized guarantee is $\sigma(S^{g}_{K}) \geq (1-\frac{1}{e})\left(\sigma(S^*) - K\epsilon_1 \epsilon_2\right) - \epsilon_2 e^{\frac{1}{K}-1}.$   
\end{theorem}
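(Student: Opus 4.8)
The plan is to write the objective in the linear form $\sigma(S)=\sum_{j\in\mathscr{V}}\tau_j\,p_j(S)$, where $p_j(S)=\Pr[j\text{ is activated}\mid S]$ is the single-node activation probability under the hypergraph IC diffusion. The hypergraph greedy lemma proved just above already gives that each $p_j(\cdot)$ is monotone and submodular (through the augmented-graph construction of Gangal et al.). For Part~1, when every $\tau_j>0$, $\sigma$ is a nonnegative linear combination of monotone submodular functions and is therefore itself monotone submodular; the classical Nemhauser--Wolsey--Fisher argument then yields the $(1-\tfrac1e)$ guarantee verbatim, so nothing beyond this observation is needed.

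For Part~2 I would split the weights as $\tau_j=\tau_j^+-\tau_j^-$ with $\tau_j^+,\tau_j^-\ge 0$, inducing $\sigma=g-h$ where $g(S)=\sum_j\tau_j^+p_j(S)$ and $h(S)=\sum_j\tau_j^-p_j(S)$ are both nonnegative, monotone and submodular. The greedy rule still selects the exact $\sigma$-marginal maximizer, so the greedy-optimality step $\sum_{v\in S^*}\Delta_\sigma(v\mid S_i^g)\le K\,[\sigma(S_{i+1}^g)-\sigma(S_i^g)]$ survives unchanged, but the two inequalities resting on monotonicity and submodularity now acquire defects coming solely from the $h$ (negative-weight) part. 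I would bound (i) the submodularity defect $\sum_{v\in S^*}\bigl[\Delta_\sigma(v\mid S_i^g)-\Delta_\sigma(v\mid S_i^g\cup\{\text{earlier}\})\bigr]$ and (ii) the monotonicity defect $\sigma(S^*)-\sigma(S^*\cup S_i^g)$ using Condition~\ref{bound_probability}: the $\epsilon_1$ bound controls the per-node increment of every $p_j$ when one seed is added, while $\epsilon_2$ controls the reachable weighted mass $\sum_j|\tau_j|p_{vj}$ emanating from a single seed. Combining them through the submodularity gap, each of the $\le K$ nodes of $S^*$ contributes at most $\epsilon_1\epsilon_2$ to the submodularity defect, i.e. a per-step error $\le K\epsilon_1\epsilon_2$, and the monotonicity defect is bounded through $h$ by a term of order $\epsilon_2$.

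With these estimates, setting $\delta_i=\sigma(S^*)-\sigma(S_i^g)$ I would assemble the perturbed recursion
\begin{equation}
\delta_{i+1}\le\Bigl(1-\tfrac1K\Bigr)\delta_i+\epsilon_1\epsilon_2+\tfrac1K\,(\text{monotonicity term}),\nonumber
\end{equation}
then unroll it with $\delta_0=\sigma(S^*)$ (using $\sigma(\emptyset)=0$), the identity $\sum_{l=0}^{K-1}(1-\tfrac1K)^l=K\bigl[1-(1-\tfrac1K)^K\bigr]$, and the estimates $(1-\tfrac1K)^K\le\tfrac1e$ and $(1-\tfrac1K)^{K-1}\le e^{\frac1K-1}$. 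The $\epsilon_1\epsilon_2$ per-step error aggregates to $K\epsilon_1\epsilon_2$ inside the $(1-\tfrac1e)$ factor, reproducing $(1-\tfrac1e)(\sigma(S^*)-K\epsilon_1\epsilon_2)$, while the monotonicity boundary term propagates through the remaining contractions to give the additive correction $-\epsilon_2 e^{\frac1K-1}$, yielding the claimed inequality.

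The main obstacle is making the defect bounds (i)--(ii) rigorous. The delicate point is that negative weights flip the sign of the submodularity gap of each $p_j$, so one must show the damage is \emph{simultaneously} controlled by the probability-increment bound $\epsilon_1$ and the reachable-mass bound $\epsilon_2$ (rather than by the uncontrolled quantity $\sum_j|\tau_j|$); this forces one to combine the two parts of Condition~\ref{bound_probability} through the submodularity gap together with a union-type estimate $p_j(S)\le\sum_{v\in S}p_{vj}$ that must be verified for the hypergraph IC model. Tracking the monotonicity defect so that it contributes only the constant-order $e^{\frac1K-1}\epsilon_2$ term, rather than a cruder $K\epsilon_2$, is the other place where care is required.
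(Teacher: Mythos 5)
Your proposal follows essentially the same route as the paper's own proof: both expand $\sigma$ in the linear form $\sigma(S)=\sum_{u\in S}\left(\sum_{v}\tau_v\,p_r(u,v)+\tau_u\right)$, both replace exact monotonicity and submodularity by defect versions controlled through Condition~\ref{bound_probability} (a per-node monotonicity defect of at most $\epsilon_2$, the paper's Claim~1, and a submodularity defect of at most $\epsilon_1\epsilon_2$ per exchanged node, the paper's Claims~2--3), and both feed these into the standard greedy contraction and unroll it. Your $\tau_j=\tau_j^+-\tau_j^-$ split is a cosmetic repackaging of the paper's direct computation of signed marginal gains, and your Part~1 argument is identical to the paper's.

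The genuine gap is in your final accumulation step, and it coincides with the weakest step of the paper itself. After $i$ greedy additions the monotonicity defect is $\sigma(S^*)-\sigma(S^*\cup S_i^g)\le i\epsilon_2$, not a uniform $O(\epsilon_2)$: each added seed can destroy up to $\epsilon_2$ of weighted mass, and these losses can hit disjoint negative-ITE nodes, so no better bound ``through $h$'' is available. Feeding $i\epsilon_2/K$ into the recursion and summing against the weights $(1-\frac{1}{K})^{K-i-1}$ produces the term $\sum_{i=0}^{K-1}(1-\frac{1}{K})^{K-i-1}\frac{i}{K}\epsilon_2$, which is of order $K\epsilon_2$, not $\epsilon_2 e^{\frac{1}{K}-1}$: already for $K=3$ it equals $\frac{8}{9}\epsilon_2 > e^{-2/3}\epsilon_2$, and asymptotically it behaves like $\frac{K}{e}\epsilon_2$. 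Hence your assertion that the monotonicity term ``propagates through the remaining contractions to give the additive correction $-\epsilon_2 e^{\frac{1}{K}-1}$'' does not follow; what this argument actually yields is an additive loss of order $K\epsilon_2$. The paper's proof makes exactly the same leap --- it bounds that sum by $\epsilon_2(\frac{1}{e})^{1-\frac{1}{K}}\int_0^1 xe^x\,dx$ via a Riemann-sum substitution that silently drops a factor of $K$ --- so you have faithfully reproduced the published route, including its flaw; but neither your sketch nor the paper's own computation establishes the stated $-\epsilon_2 e^{\frac{1}{K}-1}$ term, and any honest completion of this approach must either strengthen Condition~\ref{bound_probability} or accept the weaker $O(K\epsilon_2)$ correction.
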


\begin{proof}[Proof of sketch]
We defer the detailed proof in Appendix.~\ref{proof_theorem_main}. The first reuslt based on $\tau_i>0$ naturally holds. It is because the monotonocity and submodularity still hold when each node is attributed with non-negetive weights (\emph{i.e.}, ITE). We focus on the second results, when $\tau_i > 0$ is not guaranteed, these two important properties will not further hold. We derive new technology to a generalized version of weak monotonicity and weak submodularity.

\end{proof}

The estimation error of $\sigma(\cdot)$ can be traced back to both the Monto-Carlo strategy and the estimation error of $\tau_i$ during representation learning. We summarize it as the result on robustness analysis as follows.

\begin{theorem}[{Robustness}]\label{robust_thm}
We denote the MC estimation of $\sigma(S)$ as $\hat{\sigma(S)}$. If $\forall S \subseteq \mathscr{V}, |\frac{\hat{\sigma}(S)}{\sigma(S)} \in [1-\gamma, 1+\gamma]$ and $
\gamma \leq \frac{\varepsilon / k}{2+\varepsilon / k}, \gamma>0
$, then our {CauIM} problem can achieve the optimal guarantee can be transfered to 
$\sigma(S^{g}_{K}) \geq (1-\frac{1}{e}-\epsilon)\left(\sigma(S^*) - K\epsilon_1 \epsilon_2\right) - \epsilon_2 e^{\frac{1}{K}-1}$.

\end{theorem}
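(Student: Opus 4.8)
The plan is to view the estimation-driven procedure as an \emph{approximate} greedy executed on the exact objective $\sigma$, and then to re-run the recursion behind Theorem~\ref{theorem_main} while charging the oracle error only to the leading constant. Throughout I abbreviate $S^g_i$ for the greedy set after $i$ insertions, $a_i=\sigma(S^g_i)$, $\mathrm{OPT}=\sigma(S^*)$, and I identify $k$ with $K$ and the hypothesis parameter $\varepsilon$ with the target loss $\epsilon$.

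First I would translate the multiplicative control $\hat\sigma(S)/\sigma(S)\in[1-\gamma,1+\gamma]$ into a per-step bound on true marginal gains. Because the algorithm chooses $v^g=\arg\max_v\hat\sigma(S^g_i\cup\{v\})$, for every competitor $v$ we get $(1+\gamma)\sigma(S^g_i\cup\{v^g\})\ge\hat\sigma(S^g_i\cup\{v^g\})\ge\hat\sigma(S^g_i\cup\{v\})\ge(1-\gamma)\sigma(S^g_i\cup\{v\})$, so $\sigma(S^g_i\cup\{v^g\})\ge\frac{1-\gamma}{1+\gamma}\sigma(S^g_i\cup\{v\})$. Writing $M_i=\max_v[\sigma(S^g_i\cup\{v\})-a_i]$ for the best attainable true marginal and subtracting $a_i$ yields $a_{i+1}-a_i\ge\frac{1-\gamma}{1+\gamma}M_i-\frac{2\gamma}{1+\gamma}a_i$. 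The key algebraic observation is that the hypothesis $\gamma\le\frac{\varepsilon/k}{2+\varepsilon/k}$ is equivalent to $\frac{2\gamma}{1-\gamma}\le\varepsilon/k$, which simultaneously delivers $\frac{1-\gamma}{1+\gamma}\ge 1-\varepsilon/k$ and $\frac{2\gamma}{1+\gamma}\le\varepsilon/k$; hence $a_{i+1}-a_i\ge(1-\tfrac{\varepsilon}{k})M_i-\tfrac{\varepsilon}{k}a_i$. Thus the only cost of the noisy oracle is one extra $\tfrac{\varepsilon}{k}$-scaled slack per greedy step.

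Next I would reuse the non-submodular machinery of Theorem~\ref{theorem_main} untouched. Claims~1--3, combined with Condition~\ref{bound_probability}, never refer to the oracle; read with the maximal marginal $M_i$ they give $\mathrm{OPT}\le a_i+K M_i+K\epsilon_1\epsilon_2+i\epsilon_2$, i.e.\ $M_i\ge\frac1K(\mathrm{OPT}-a_i-K\epsilon_1\epsilon_2-i\epsilon_2)$. Substituting into the per-step inequality produces a contraction of the form $a_{i+1}\ge(1-c)a_i+c(\mathrm{OPT}-\mathrm{err}_i)-\tfrac{\varepsilon}{k}a_i$ with $c=\tfrac{1-\varepsilon/k}{K}$ and $\mathrm{err}_i=K\epsilon_1\epsilon_2+i\epsilon_2$. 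Setting $b_i=\mathrm{OPT}-a_i$ and unrolling, the homogeneous factor becomes $(1-c)^K\le e^{-(1-\varepsilon/k)}=e^{-1}e^{\varepsilon/k}$, while the accumulated $\tfrac{\varepsilon}{k}$-slacks telescope to an $O(\varepsilon)\,\mathrm{OPT}$ term; together they degrade the clean $e^{-1}$ of Theorem~\ref{theorem_main} into $e^{-1}+\epsilon$, turning the factor $(1-\tfrac1e)$ into $(1-\tfrac1e-\epsilon)$. The $\epsilon_1,\epsilon_2$ pieces propagate exactly as before, regenerating the $(1-\tfrac1e)K\epsilon_1\epsilon_2$ and $\epsilon_2 e^{1/K-1}$ corrections and reproducing the claimed inequality.

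The main obstacle is the final bookkeeping: I must make the extra per-step slack accumulate to a loss of order $\varepsilon$ in the \emph{leading} constant rather than $K\varepsilon$ or $\varepsilon/K$, which is exactly what the calibrated threshold on $\gamma$ is engineered to guarantee, and I must keep the oracle error additive with respect to the $\epsilon_1\epsilon_2$ terms. The subtle point is the $-\tfrac{\varepsilon}{k}a_i$ term: when $\tau$ may be negative and monotonicity fails, $a_i=\sigma(S^g_i)$ is not controlled by $\mathrm{OPT}$, so I would instead bound $|a_i|$ directly through the bounded-ITE assumption $\max_v|\tau_v|\le M$ together with Condition~\ref{bound_probability} (which caps each node's reachable $|\tau|$-mass by $\epsilon_2$), and check that this substitute bound leaves the two error sources decoupled.
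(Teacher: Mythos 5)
Your high-level plan is the same as the paper's: treat the noisy-oracle selection as an approximate greedy on the exact objective, rerun the recursion behind Theorem~\ref{theorem_main}, and use the calibration $\gamma \le \frac{\varepsilon/K}{2+\varepsilon/K} \Leftrightarrow \frac{2\gamma}{1-\gamma} \le \frac{\varepsilon}{K}$ to convert the oracle error into an $\varepsilon$-loss in the leading constant. In fact, if you recombine your two ingredients \emph{without} the additive split, you get precisely the inequality the paper unrolls:
\begin{equation*}
\sigma\bigl(S^g_{i+1}\bigr) \;\ge\; \frac{1-\gamma}{1+\gamma}\left(\Bigl(1-\frac{1}{K}\Bigr)\sigma\bigl(S^g_i\bigr)+\frac{\sigma(S^*)}{K}-\epsilon_1\epsilon_2-\frac{i\epsilon_2}{K}\right).
\end{equation*}
The genuine gap is in your ``key algebraic observation.'' Passing from $\frac{1-\gamma}{1+\gamma}M_i-\frac{2\gamma}{1+\gamma}a_i$ to $(1-\frac{\varepsilon}{K})M_i-\frac{\varepsilon}{K}a_i$ uses $\frac{1-\gamma}{1+\gamma}M_i\ge(1-\frac{\varepsilon}{K})M_i$ and $\frac{2\gamma}{1+\gamma}a_i\le\frac{\varepsilon}{K}a_i$, and \emph{both} replacements are valid only when $M_i\ge 0$ and $a_i\ge 0$. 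But this theorem's whole point is the regime where ITEs may be negative and monotonicity fails: there, all true marginals can be negative (so $M_i<0$) and $a_i=\sigma(S^g_i)$ can be negative, so the per-step inequality on which your unrolling rests can simply be false. This is not a bookkeeping subtlety you can defer to the end; it invalidates the contraction before the telescoping starts.

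Your own proposed repair does not recover the stated bound. Bounding $|a_i|$ via $\max_v|\tau_v|\le M$ and Condition~\ref{bound_probability} gives $|a_i|\le i(M+\epsilon_2)$, so the accumulated per-step slacks $\frac{\varepsilon}{K}|a_i|$ (with geometric weights) total on the order of $\varepsilon K(M+\epsilon_2)$ --- an additive loss proportional to $KM$, not to $\sigma(S^*)$. With signed ITEs, $\sigma(S^*)$ can be far smaller than $KM$ due to cancellation, so this is strictly weaker than the claimed degradation $-\varepsilon\bigl(\sigma(S^*)-K\epsilon_1\epsilon_2\bigr)$; the ``slacks telescope to $O(\varepsilon)\,\mathrm{OPT}$'' step is exactly what cannot be justified. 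The paper avoids the issue by never splitting the factor: it keeps $\frac{1-\gamma}{1+\gamma}$ multiplying the entire bracket, so the coefficient $\rho=\frac{(1-1/K)(1-\gamma)}{1+\gamma}\ge 0$ of $a_i$ requires no sign information about $a_i$ or $M_i$; it unrolls from $a_0=0$, sums the geometric series exactly, and only then applies the $\gamma$-threshold to the aggregated constant, where the $\varepsilon$-loss lands on $\sigma(S^*)-K\epsilon_1\epsilon_2$ as stated (both arguments do implicitly use $\sigma(S^*)-K\epsilon_1\epsilon_2\ge 0$ at that final stage). If you delete the early additive decomposition and run your recursion multiplicatively, your argument becomes the paper's proof.
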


We refer readers to Appendix.~\ref{proof_robust_thm} for details. In addition, notice that the estimation error $\hat{\tau_i} - \tau_i$ also casuses the error  $\hat{\sigma}(S) - \sigma(S)$. Specifically, if we have $|\hat{\tau_i} - \tau_i |\leq \delta$, then $\hat{\sigma}(S) - \sigma(S) $ can also be bounded.

\begin{corollary}[Robustness of noise]\label{robust_coro}
We consider the ideal CauIM case without MC strategy. The traditional IM objective function (\emph{i.e., $\tau _i = 1, \forall v_i \in \mathscr{V}$}) is denoted as $\sigma_{naive}$. If $|\hat{\tau}_i - \tau_i |\leq \delta$ and $|\frac{\sigma_{naive}(S)}{\sigma(S)}|\leq \frac{\gamma}{\delta}$, then Theorem.~\ref{robust_thm} holds.
\end{corollary}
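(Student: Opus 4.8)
The plan is to show that the ITE estimation error $\delta$ propagates into a multiplicative error on $\sigma$ that is bounded by $\gamma$, thereby reducing the claim to Theorem~\ref{robust_thm}. The key observation is that the diffusion dynamics and the ITE weights decouple: the activation probabilities $p_{uv}$ governing the spread do not depend on the $\tau_i$, so for any seed set $S$ the probability $q_k(S)$ that node $k$ is eventually activated is identical whether we weight the infected nodes by the true $\tau_k$ or by the estimated $\hat{\tau}_k$. First I would apply linearity of expectation to the definition $\sigma(S) = \bE[\Phi(S_n)] = \bE\big[\sum_{k \in S_n}\tau_k\big]$ to write $\sigma(S) = \sum_{k \in \mathscr{V}} q_k(S)\,\tau_k$, and correspondingly $\hat{\sigma}(S) = \sum_{k \in \mathscr{V}} q_k(S)\,\hat{\tau}_k$ with the \emph{same} coefficients $q_k(S)$.

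Second, I would subtract the two expressions and bound the difference node-wise. Using the triangle inequality together with the hypothesis $|\hat{\tau}_i - \tau_i| \leq \delta$,
\[
|\hat{\sigma}(S) - \sigma(S)| = \Big|\sum_{k} q_k(S)\,(\hat{\tau}_k - \tau_k)\Big| \leq \delta \sum_{k} q_k(S).
\]
The crucial identification is that $\sum_k q_k(S)$ is precisely the naive (number-oriented) objective $\sigma_{naive}(S)$, since setting all $\tau_i = 1$ turns the weighted sum into the expected count of activated nodes. Hence $|\hat{\sigma}(S) - \sigma(S)| \leq \delta\,\sigma_{naive}(S)$ for every $S \subseteq \mathscr{V}$.

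Third, I would divide through by $\sigma(S)$ and invoke the second hypothesis $|\sigma_{naive}(S)/\sigma(S)| \leq \gamma/\delta$ to get $\big|\hat{\sigma}(S)/\sigma(S) - 1\big| \leq \delta \cdot (\gamma/\delta) = \gamma$, i.e. $\hat{\sigma}(S)/\sigma(S) \in [1-\gamma, 1+\gamma]$ uniformly over all $S$. This is exactly the premise of Theorem~\ref{robust_thm}, so the conclusion of that theorem transfers verbatim and the corollary follows.

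The step I expect to require the most care is the sign handling in the final division: because ITEs may be negative, $\sigma(S)$ need not be positive, so I must track absolute values throughout and cannot assume $\sigma(S) > 0$. The hypothesis is stated with $|\sigma_{naive}(S)/\sigma(S)|$ precisely to absorb this, and since $\sigma_{naive}(S) \geq 0$ always (it is an expected count), the bound $|\hat{\sigma}(S) - \sigma(S)| \leq \delta\,\sigma_{naive}(S)$ combines cleanly with it regardless of the sign of $\sigma(S)$. A secondary point worth stating explicitly is the decoupling claim that the coefficients $q_k(S)$ are identical in the true and estimated objectives; this is immediate from the model, but it is exactly what lets the $\tau$-error factor out linearly, so I would record it as a short lemma before the main estimate.
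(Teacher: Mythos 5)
Your proposal is correct and follows essentially the same route as the paper's proof: both express $\sigma(S)$ as a linear functional of the ITEs with diffusion-determined coefficients that do not depend on $\tau$, factor out the error to get $|\hat{\sigma}(S)-\sigma(S)|\leq \delta\,\sigma_{naive}(S)$, and then divide by $\sigma(S)$ to land in the premise of Theorem~\ref{robust_thm}. If anything, your write-up is slightly more careful than the paper's (which omits absolute values inside the sum and sums reachability over seeds rather than activation probabilities over nodes), and your explicit handling of possibly negative $\sigma(S)$ is a worthwhile point the paper glosses over.
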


We defer the proof in Appendix.~\ref{proof_robust_coro}.

\section{Experiments}\label{experiments}

 We execute experiments on four datasets and validate the findings presented in Theoretical Analysis Section. The anonymous code of implementation is available in the link~\footnote{
 {\url{}{https://anonymous.4open.science/r/CauIM-0331}}}.
  We aim to answer the following three questions.

% \footnote{This paper presents CauIM, an anonymized code repository available at the following anonymous link: \url{https://drive.google.com/file/d/1AxR7_8PtnAxdMEFuGn-RRrqzKHmOimOE/view?usp=sharing}. }
% \begin{itemize}
%     \item \textbf{RQ1: Effectiveness (Theorem.~\ref{theorem_main})} In order to calculate the max sum of node ITE that represents the overall causal effect, can our CauIM outperform the traditional IM methods?
%     \item \textbf{RQ2: Robustness~(Theorem.~\ref{robust_thm},~Corollary.~\ref{robust_coro}})  If our ITE estimation is not accurate enough, can CauIM perform more robustly? The robustness means in perturbations, our approach will achieve an approximate result  close to the normal state.
%     \item \textbf{RQ3: Sensitivity} Which part and parameters of the model are essential for the performance of CauIM? We tend to explore the effectiveness of different components in the model.
% \end{itemize}

 \noindent $\bullet$  \textbf{RQ1: Effectiveness} (Theorem~\ref{theorem_main}) To calculate the max sum of node ITE that represents the overall individual gains, can our G-CauIM and A-CauIM outperform the traditional IM methods and maintains high efficiency?\\
 $\bullet$  \textbf{RQ2: Robustness}~(Theorem~\ref{robust_thm},~Corollary~\ref{robust_coro})  If our ITE estimation is not accurate enough, can CauIM perform more robustly, namely, achieves an approximate result close to the normal state?\\
 $\bullet$  \textbf{RQ3: Sensitivity} Which part/parameters of the model are essential for the performance of CauIM? The effectiveness of different components of the model are explored.
    
\paragraph{Experimental Settings}
Our real-world data comprises three real-world public datasets: GoodReads~\footnote{\url{https://www.goodreads.com/}{https://www.goodreads.com/}}, Contact~\citep{mastrandrea2015contact}, and Email-Eu~\citep{Benson-2018-simplicial}. Furthermore, we incorporate a synthetic dataset named SD-100 comprising 100 nodes and 100 hyperedges, whose initial treatments and feature settings are detailed in Appendix~\ref{data_ass}. We compare the performance of G-CauIM and A-CauIM with the traditional greedy selection method without parameter adjusting strategy (Noted as "Baseline") on the aforementioned four datasets. Additionally, we choose the randomized seed selection strategy as another baseline.  We randomly conduct each experiment for $10$ times and each time for $20$ rounds in influence estimation. For basic hyperparameters, we set seed number $K=15$ and spread probability $P_{SICP}$ as 0.01 ( denoted as $p$ for simplicity). The evaluation metric is the sum of ITE spread by selected seeds with the same trained ITE estimation module illustrated in Section ~\ref{methodology}. We refer readers to Appendix~\ref{data_ass} for detailed illustrations of datasets and basic assumptions.

% \subsection{Results and Analysis}
% Firstly, we show the performance of CauIM and other baselines to answer \textbf{RQ1}. Secondly, we provide robustness and sensitivity analysis in detail to answer \textbf{RQ2} and \textbf{RQ3}.

\subsection{General CauIM Performance}
\textbf{RQ1}~
%  we first select the seed set, then we estimate the final sum of ITE using the seeds generated in basic algorithms and our methods. As for the Random selection algorithm,  we sample $15$ seeds from all  nodes whose number of belonging hyperedges is no less than one and measure their influence by the sum of ITE. 
Results of performance experiments are presented in Table ~\ref{tab:performance_comparison} and a detailed analysis of two datasets is shown in Figure~\ref{comp1} and Figure~\ref{comp2}. We can summarize it as four phenomenons: 1) G-CauIM shows slightly improvement over traditional Greedy and improves significantly compared with random selection, while the gap of their curves widens as the seed number increases. 2) A-CauIM achieves comparable performance with G-CauIM while significantly enhancing efficiency.  3) The fluctuation amplitude in the G-CauIM curve is relatively small, as demonstrated in Figure~\ref{tab:performance_comparison}, due to its alignment with the dynamic changes in individual effects and support from our Theorem~\ref{theorem_main} and Theorem~\ref{robust_thm} in Theoretical Analysis Section. 4) Traditional Greedy loses its advantage in most situations and degenerates to the random method. More details can be found in Appendix~\ref{expd}.

\begin{table}[t]
\setlength{\tabcolsep}{1mm}
\centering
    \caption{RQ1: Performance comparison of four different methods under four datasets (seed number=15). Our methods gain general improvements compared with baselines: Traditional Greedy (donated as "Baseline") and Random Selection). }
    \label{tab:performance_comparison}
    \centering
        {\begin{tabular}{l|l|l|l|l}
            \hline
            \textbf{Methods} & \textbf{GoodReads}& \textbf{Contact} & \textbf{Email-Eu}& \textbf{SD-100} \\
            \hline
            Baseline & 297.56& 68.12 & 735.28 & 138.91 \\
            Random & 45.86& 66.51 & 590.67 & 145.97 \\
           G-CauIM & \textbf{330.25}& \textbf{69.53} & \textbf{804.28} & 151.59 \\
           A-CauIM & 302.17& 66.78 & 802.41 & \textbf{160.49} \\
            \hline
        \end{tabular}}
\end{table}

\begin{table}[t]
\setlength{\tabcolsep}{1mm}
    \centering
    \caption{G-CauIM presents a significant efficiency improvement compared with other competitive baselines (GPU, Torch $1.11.0$). The complexity  is analyzed in Section~\ref{methodology}. }
    \label{tab:performance_time}
    \centering
        {\begin{tabular}{l|l|l|l|l}
            \hline
            \textbf{Methods} & \textbf{GoodReads}& \textbf{Contact} & \textbf{Email-Eu}& \textbf{SD-100} \\
            \hline
              Baseline & 1day & 6h40min & 22h & 3h\\
            Random  & 2s& 2s & 2s & 2s \\
            G-CauIM & 1day02h & 7h & 23h &  3h\\
           A-CauIM & 28s& 115s & 550s & 53s \\
            \hline
        \end{tabular}}
\end{table}

\begin{figure}[t]
%\vskip 0.2in
\subfigure[Result on GoodReads]{
    \includegraphics[width=0.23\columnwidth]{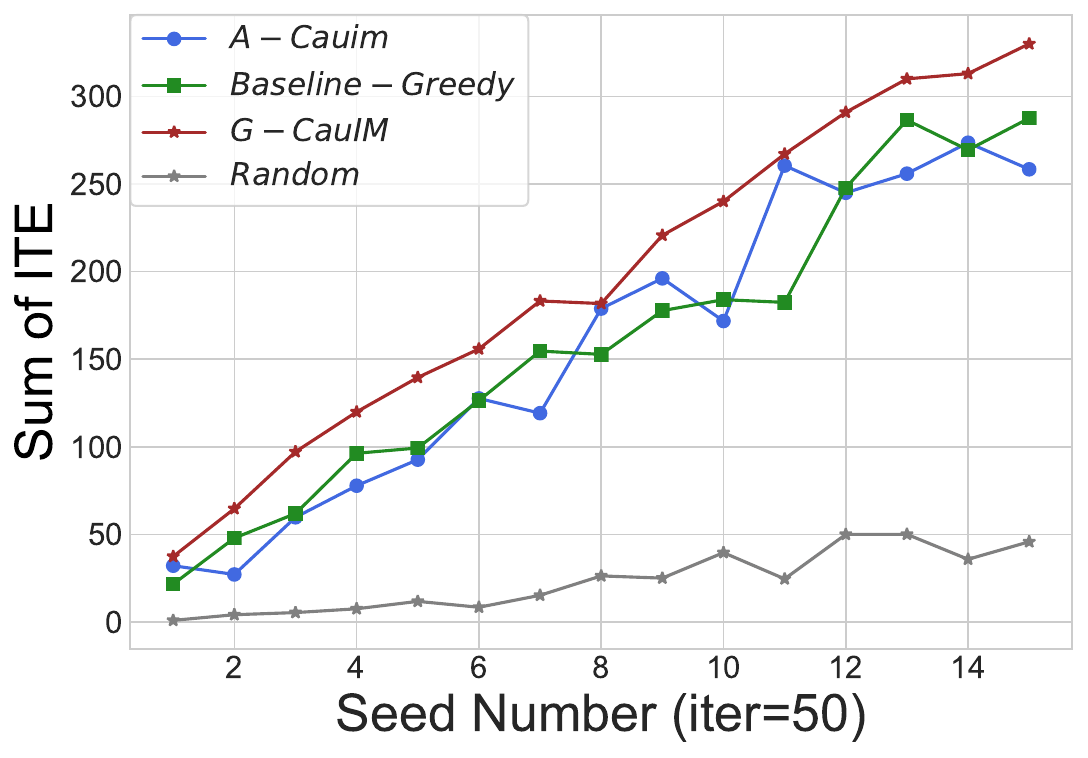}
    \label{comp1}
}
\subfigure[Result on Email-Eu]{
    \includegraphics[width=0.23\columnwidth]{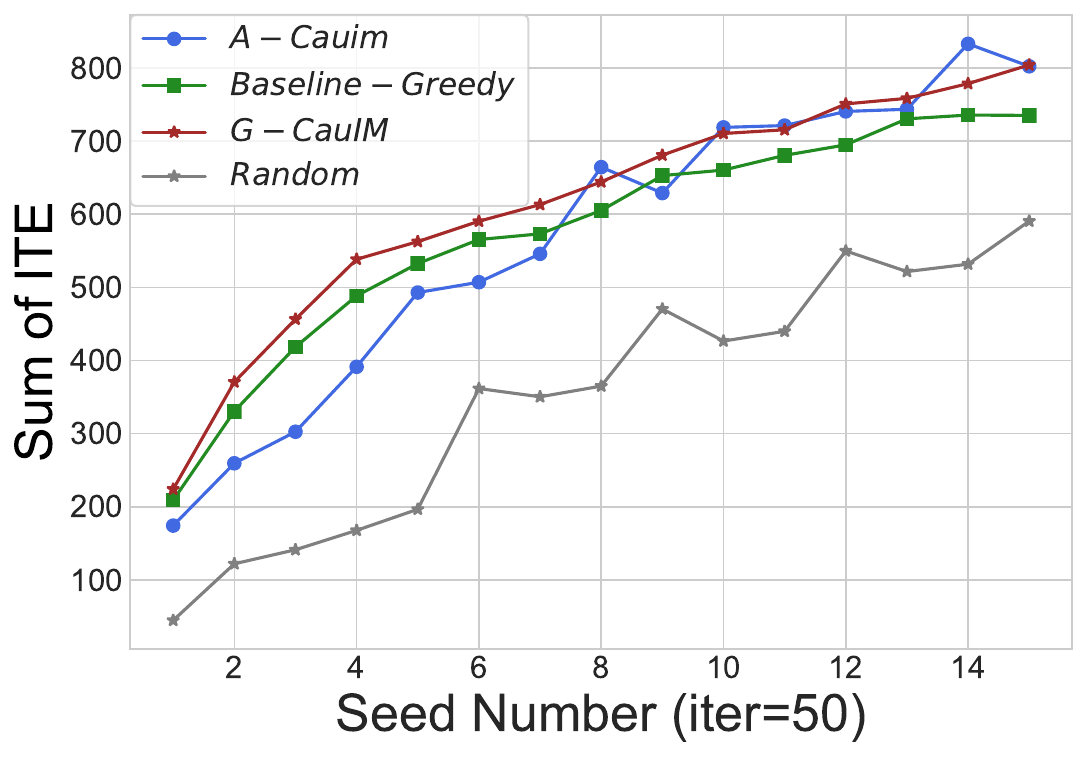}
    \label{comp2}

}
\subfigure[Analysis on noise $\epsilon$] {
    \includegraphics[width=0.23\columnwidth]{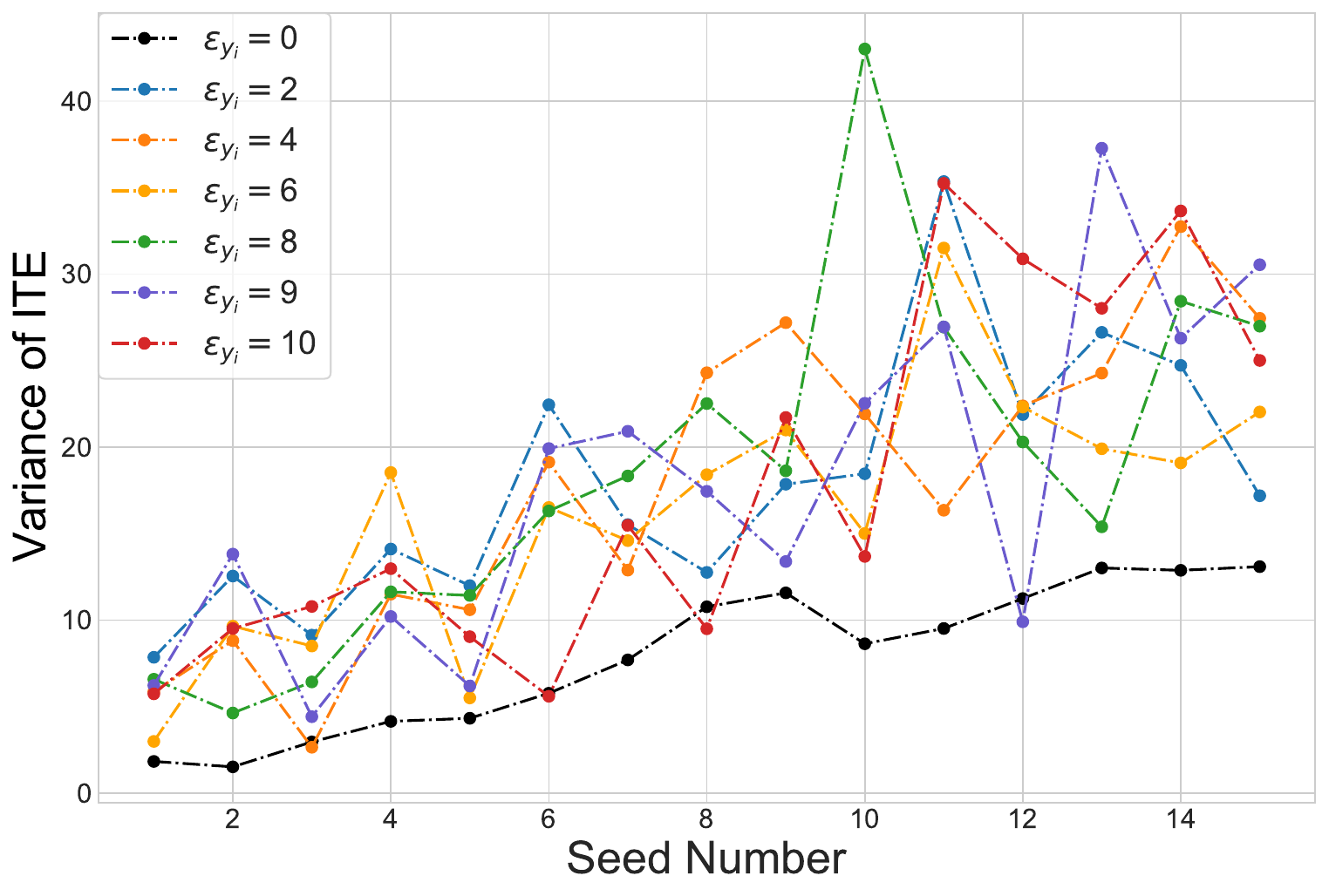}
		\label{robust}
	}
	%\hspace{-0.08in}
	\subfigure[Analysis on various $p$] {
		\includegraphics[width=0.23\columnwidth]{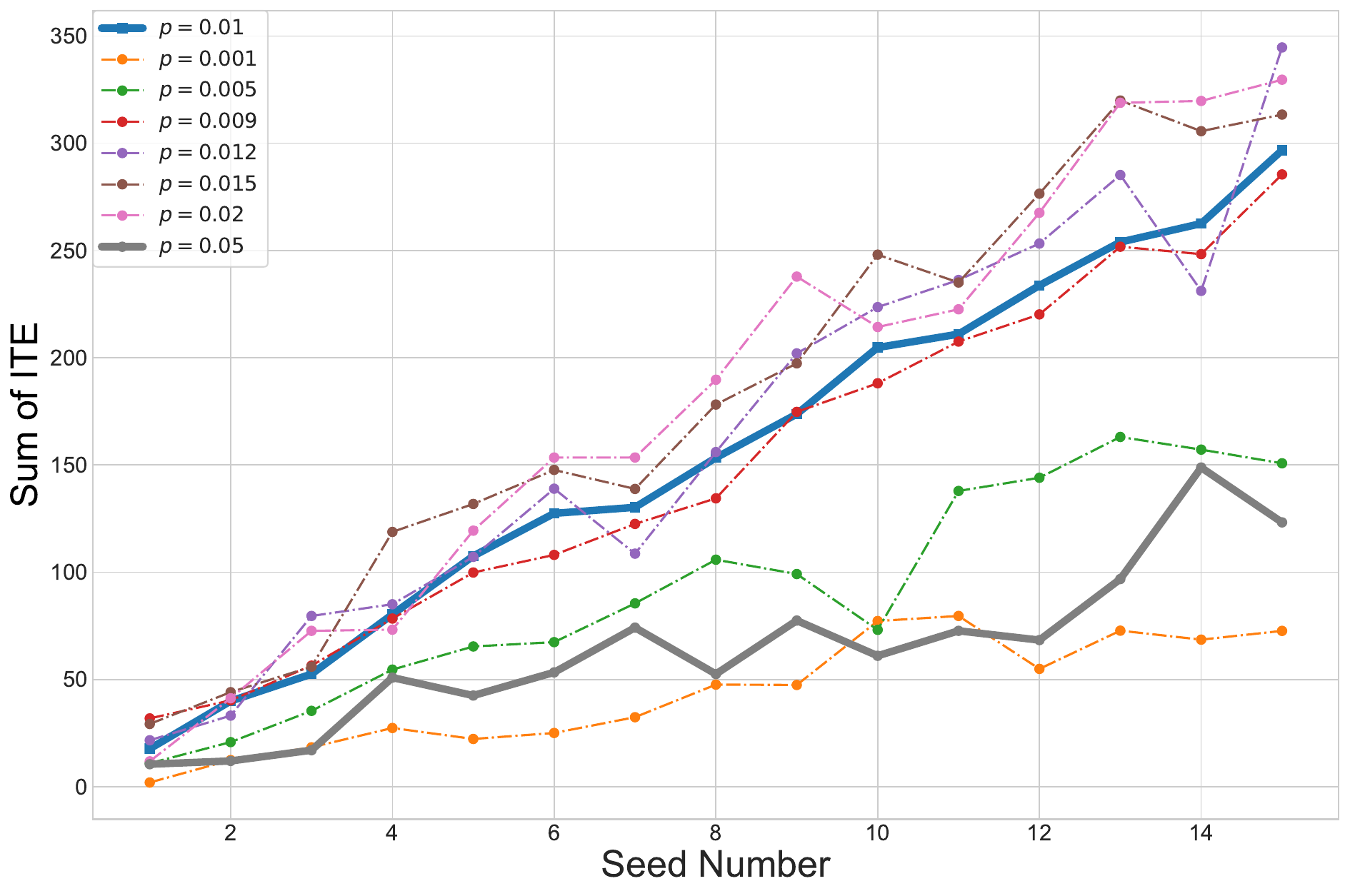}
		\label{p}
	}

 \caption{a \& b)~Performance of {CauIM} on the GoodReads and Contact dataset. ``Iter'' refers to the time step in each seed selection round here. c) Variance curve trend with different noise in individual ITE. $\epsilon_{y_i}$ represents the standard variance of the noise included. d) Final sum of ITE under various $p$.}

%\vskip -0.2in
\end{figure}

\subsection{Robustness and Sensitivity Analysis }
Experiments for addressing RQ2 and RQ3 are conducted on A-CauIM. It is sufficient since A-CauIM is a learning-based method with slightly more uncertainty but outstanding efficiency. We conduct the experiments on the GoodReads dataset for ease of analysis.

\noindent \textbf{RQ2 \& RQ3} ~To examine robustness of our model, we add noise $\epsilon_{y}$ to individual ITE simulation results , where $\epsilon_{y} \sim N(0,\sigma^2)$. Then, the randomness of experiments comes from three parts: 1) $\epsilon_y$ , 2) the propagation probability $p$ , and 3) the dynamical ITE of each node.  We modify the scale of the noise and plot corresponding curves. Figure~\ref{robust} can provide the following illustrations: 1) when the noise becomes larger (not exceeding $9$), it is not enough to counter the random variance of the propagation process and dynamical ITE itself. This powerlessness phenomenon is changed until noise reaches $10$. ~2) In the process of seed growth, when the number of seeds is smaller than $10$ in the early stage, the effect of total noise of the infected nodes is not large. Still, after the seeds increases over $10$, the scale of affected nodes rises significantly, for the effect of the sum of noise becomes large. Thus, we conclude that our model can maintain a relatively robust state with the $\sigma$ of noise not exceed $8$. This result is consistent with our theoretical part.  For detailed explanations, please refer to Appendix \ref{expd}.

We detect how the components of the algorithm work and investigate how the diffusion process affects A-CauIM with various parameter $p$; As shown in figure~\ref{p}, curves of ITE sum nearly merge to one when $p$ is larger than 0.01 and smaller than 0.05. This convergence indicates that complete traversal of nodes in the hypergraph occurs with a sufficiently high propagation probability, leading to the stabilization of the total dynamic ITE. Additionally, it is noteworthy that the performance sharply declines when $p=0.05$, possibly due to the broader diffusion process intensifying the randomness in dynamical ITE and consequently augmenting the uncertainty of influence spreading.  In conclusion, with p changing in a certain range, our algorithm will work stably well, and $p=0.02$ is the approximately best choice for achieving the best performance.

% \begin{figure}[t] 
% \vskip 0.2in
% \centering
% 	\subfigure[Analysis on different noise] {
%     \includegraphics[width=0.46\columnwidth]{figures/Result_rubost.pdf}
% 		\label{robust}
% 	}
% 	%\hspace{-0.08in}
% 	\subfigure[Analysis on different $p$ parameters] {
% 		\includegraphics[width=0.46\columnwidth]{figures/Result_p_new.pdf}
% 		\label{p}
% 	}

% 	\caption{a) Variance curve trend with different noise in individual ITE. $\epsilon_{y_i}$ represents the standard variance of the noise included. b) Final sum of ITE under various $p$ parameters.}

% 	\label{control_iter}
% 	\vskip -0.2in
% \end{figure}

% 以前滴version
% \iffalse
% \begin{figure} 
% \vskip 0.2in
% \centering
% 	\subfigure[] {
% 		\includegraphics[width=0.3\columnwidth]{figures/}
% 		\label{ite_new}
% 	}
% 	%\hspace{-0.08in}
% 	\subfigure[] {
% 		\includegraphics[width=0.3\columnwidth]{figures/}
% 		\label{time_new}
% 	}
% 	\caption{Sum of ITE curve in the left and total cost time  curve in the right with different $iters$.}
% 	\label{control_iter}
% 	\vskip -0.2in
% \end{figure}
% \fi

\section{Related work}
\label{related-work}
% In our paper, we are the first to take the causal influence perspective into hypergraph-based IM. In this sense, we introduce the background of IM problem and the interference of causal effect on the (hyper)graph, respectively.

\noindent{\textbf{Influence Maximization (IM)}} IM is first identified as an algorithm problem by Kempe in~\cite{kempe2003maximizing} and has given rise to several notable variants, including simulation-based (CELF~\citep{leskovec2007cost}), sketch-based (RIS~\citep{borgs2014maximizing}, TIM~\citep{tang2014influence}, IMM~\citep{tang2015influence}), and heuristic algorithms (HDD~\cite{xie2022influence}), etc. Three key elements of the problem are 1) {graph structure, 2) diffusion process and 3) seed selection}. For instance, CELF, RIS, and TIM emphasize iterative algorithmic enhancements while ensuring theoretical assurances within triggering models. In contrast, the latest heuristic algorithm, HDD, prioritizes computational efficiency at the cost of sacrificing theoretical guarantees. Additionally, recent learning-based IM methods~\citep{chen2023touplegdd,kumar2022influence,ling2023deep}focus on understanding the inherent nature of individual node representations concerning the marginal influence gain. Nevertheless, these methods exhibit limitations in model generalization and the reliability of final results.

The original optimization objective, which entails the summation of activated nodes, requires reassessment in diverse scenarios. We refer readers to the Appendix~\ref{related} for more details. Researchers have increasingly focused on exploring the heterogeneity of importance between nodes based on these methods~\citep{wang2016efficient, hu2023triangular}. However, much of the existing literature introduces prior information based on fixed topological features or node attributes and assumes each individual importance to be a constant without integrating a unified computational framework involving environmental sensitivity for calculating the dynamic individual effect of nodes.

\noindent \textbf{Treatment effect estimation}
How to recover the ITE directly from the observational data instead of randomization test~\citep{rubin1978bayesian} is currently receiving a lot of attention. There are two main strategies for estimation: 1) weighting-based methods~\citep{li2019propensity,li2018balancing, franklin2014metrics}, and 2) representation-based methods~\citep{shalit2017estimating, johansson2020generalization}. In this paper, we follow the second strategy. \iffalse For the first method, we make the covariates relatively balanced in the treatment and control groups by adjusting the sample weights; the famous strategy includes propensity scores~\citep{li2019propensity,li2018balancing}, covariate balance~\citep{franklin2014metrics} (i.e., moment matching and distance optimisation). The second method is from the perspective of domain generalization~\citep{shalit2017estimating, johansson2020generalization}. If it is difficult to achieve balance by directly adjusting the sample weights, it is worthwhile to learn a representation that is balanced in the distribution of the treatment and control groups. In doing so, we retain as much as possible the predictive power of the representation for outcome (which, in terms of the causal graph, is equivalent to simply removing the edge from the confounder to the treatment).\fi
 \citet{ma2022learning} estimated the causal effect via representation learning on the more general hypergraph. However, these methods do not consider the IM question. 

\iffalse
\textbf{Organization}~Our paper is organized as follows. In Section~\ref{preliminary}, we describe the main notations and rigorously formalize our problem, and then summarize its core challenges. In Section~\ref{methodology}, we propose our method {CauIM} and separate it into three main parts. In Section~\ref{methodology}, we theoretically prove the approximately optimal lower bound of {CauIM}, which serves as a generalized version of the well-known $(1-\frac{1}{e})$ guarantee, along with the robustness analysis. Finally, we conduct experiments in Section.~\ref{theory} and make conclusions in Section.~\ref{conclusion}.

\fi

\section{Conclusion and Discussion}\label{conclusion}
In this paper, we analyze traditional IM from a causality perspective. Our CauIM framework can extract approximately optimal seed sets to achieve novel influence maximization.

%\zhiheng{add the explanation upon ris!}
\paragraph{Discussion on Sketch-based Models} In addition to the limitations inherent in triggering-based diffusion models, sketch-based models may face challenges in capturing dynamically changing node weights within their environments during the seed selection stage, when compared with greedy-based models. Furthermore, sparse graph structures could reduce the efficiency of sketch-based models, our acceleration method can be adopted to diverse graph configurations. To summarize, addressing the aforementioned challenges is a new area of focus and will be the subject of our future work.

\paragraph{Future Directions} Our paper has also opened up innovative research directions. (i) Widespread applicability; it should be noticed that the preference for greedy-based acceleration algorithms initially arises from their superior adaptability to diverse models and varying node weights compared to other traditional IM methods. (ii) Scenario transferability; ongoing exploration involves examining the intricate interference of online IM methods in various type of incomplete networks. 
% (iii) Technical versatility; in essence, the plugin we have developed can be seamlessly extended to other classic IM algorithms. Blessed by these properties, we are leaving these promising possibilities for future research.

%\zhiheng{1 can be generalized; 2 greedy is broad; 3 ris tends to acceleration, performance, theory--three perspective. 4 better fix to the nonlinear case}

\clearpage
\bibliography{references}

\bibliographystyle{plainnat}
\appendix
\onecolumn

%\tableofcontents
\section{Methodology Details}
\begin{figure}[h]
%\vskip 0.2in
    \centering
    \includegraphics[width = 9cm]{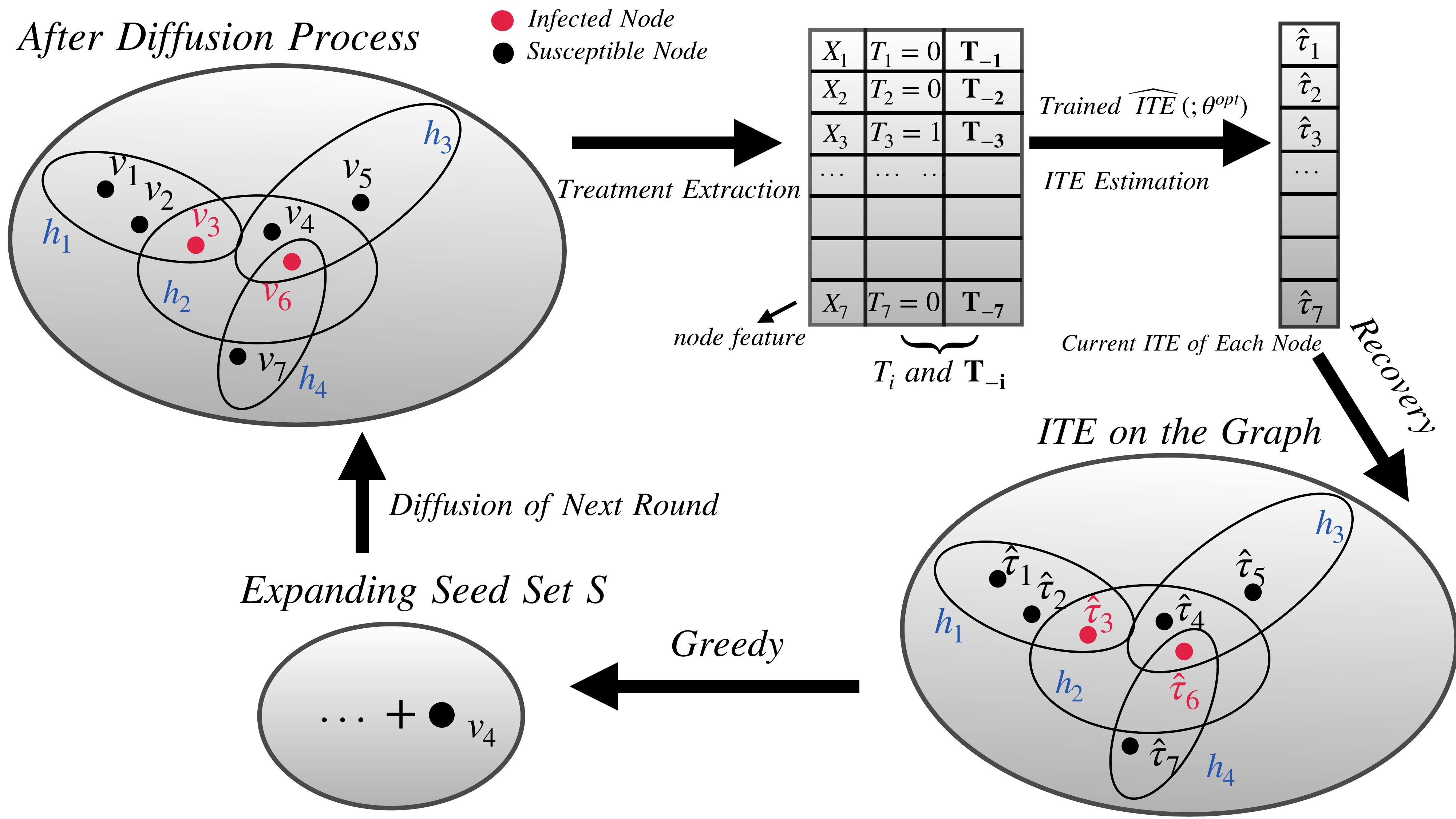}
    \caption{{The procedure of G-CauIM. $T_i$ indicates whether the node is activated or not (we can only observe one situation for each node ${v_i}$) and \bm{$T_{-i}$} represents activated status of its surroundings, as illustrated in Equation \ref{causal_effect}. For each round, we construct the ITE estimation $\hat{\tau}_{i}:= \widehat{ITE}(;\theta^{opt})$ mentioned in Section \ref{methodology} and then treat it as the node weight. Furthermore, we conduct a weighted greedy algorithm with SICP propagation mechanism (Figure~\ref{pic1}). As a result, we expand the seed set ($v_4$ is added), targeting the (estimated) largest sum of ITE. {The main challenge is that $\hat{\tau}_{i}$ are not constants, since the omitted parameter (Equation~\ref{causal_effect}) changes according to the different activation status of nodes in each iteration.} }}
    \label{alg}
    %\vskip -0.2in
\end{figure}

\section{Comparison}
\begin{table*}[htpb] 
	\centering
	\begin{tabularx}{\textwidth}{lXXX}
		\toprule
		  & Simulation-based& Proxy-based &Sketch-based  \\
		\midrule
	Basic idea
		&Use Monte-Carlo (MC) simulation to evaluate ITE influence spread    & Design proxy models to approximate influence function with varying ITE   & Construct theoretically grounded sketches considering the sum of ITE under the diffusion model, to improve theoretical efficiency of the simulation-based methods while preserving the approximation guarantee		\\
		\midrule
		property  & NP-hard complexity, total theoretical guarantee  & Polynomial/linear complexity, no theoretical guarantee    & Quasi-linear  complexity, partial theoretical guarantee   \\
		\midrule
		Disadvantages   & Computational overheads  & Insensitive to the unstable scenarios  & Not general to a wider range of diffusion models\\
		\midrule
		Datasets &Small to medium-sized datasets under all propagation models   & Large-scale datasets with distinctive graph structure under specific propagation models  & Data sets with less compositional sparsity under Triggering propagation models \\
		\midrule
		Examples&Greedy, CELF (famous)  & HDD~\citep{xie2022influence}, HSD~\citep{xie2022influence},  CIA~\citep{zhang2023influence} & RIS~\citep{borgs2014maximizing}, IMM~\citep{tang2015influence}, BKRIS
		\\
 % infected numbers
		\bottomrule 
	\end{tabularx}% 
	\caption{Comparison of three types of traditional IM  under Our CauIM framework.}
	\label{table_compare_1}
\end{table*}%

%5:xie2022influence, 6:zhang2023influence, 7: aktas2022influence, 8: borgs2014maximizing, 9:tang2015influence, 10: 不引

\begin{table*}[t]
	\centering
	\begin{tabularx}{\textwidth}{lXX}
		\toprule
		 &  {CauIM} &  Traditonal IM  \\
		\midrule
	Basic idea
		&Leverages observational data to estimate the ITE of each node and to maximize the sum of varying ITEs 
among the infected individuals considering environmental information 
		& Maximize the numbers of infected individuals		\\
		\midrule
		Objective function  & $\arg \max _{S \subseteq \mathscr{V} \wedge|S|=K} \mathbb{E}[|\Phi(S)|]$, where $\Phi(S)=\sum_{{v_i} \in S} \mathbb{E}\tau_{i}$ measures the sum of ITE (informal, parameter of interference is omitted)  &$\arg \max _{S \subseteq \mathscr{V} \wedge|S|=K} \mathbb{E}[|\Phi(S)|]$, where $\Phi(S)$ measures the sum of infected numbers\\
		\midrule
		Application Scenarios & Maximize total sum of individual gains & Maximize infected numbers \\
 % infected numbers
		\bottomrule  
	\end{tabularx}%

	\caption{A comparison between CauIM and traditional IM (Notice that CauIM is more challenging than the sum-weighted IM, since the ITE would be negative and varying during different propagation.)}
	\label{table_compare_2}
\end{table*}%
Further comparison is provided in Table~\ref{table_compare_1} and Table~\ref{table_compare_2}. As shown in Table~\ref{table_compare_1}, the Simulation-based CauIM is successfully implemented in this work.

\section{Experiments Details} \label{data_ass}

\subsection{Details of Datasets and Problem Background}
GoodReads~\citep{hu2008collaborative,wan2018item} collects information on different categories of books, with each item containing the book title, content, and other details. Using the "Author-book" relationship, a node represents the specific book category  and  hyperedge aggregates books written by  the same author in our hypergraph. We consider a scenario of recommending book sales, where the diffusion process is facilitated by reader groups associated with each book category node in the “Author-book” network. Treatment denotes the recommendation for book sales,  with $t_i$  set to $1$ when book node $i$ is recommended. Each book category is associated with potential sales income, influenced by recommendations and other books within the same hyperedge.  Our goal is to identify a k-set(or k kinds) of books to sell at the beginning, aimed at maximizing the total gap of sales  with/without recommendation. 
Our core optimization function is to maximize the sum of ITE, where the ITE of each book means the difference in  potential sold income with/without the recommendation in our experiment. For Contact~\citep{benson2018simplicial}, it constructs simplicial complexes by grouping individuals in close connectivity at the same timestamp, represented by  hyperedges. Here we simulate a situation where we deliver an AIDS Awareness Talks to particular students, and the core concepts can be disseminated through hyperegde groups. Our objective is to select initial student representatives to be educated in order to maximize the overall benefits of the talk (This can also be viewed as maximizing the total sum difference between having the anti-drug talk and not having it). We simulate covariates($x_i$) of each student in a Mixed Gaussian distribution considering differences among diverse groups:
\begin{equation}
    x_i \sim \sum_{j =1}^L \omega_i \mathcal{N}(\mu_j, I).
    \label{simu_x}
\end{equation}
Here we set $L=4, \omega_1 = 0.4$. Moreover, $\{\omega_1, \omega_2, \omega_3, \omega_4\} = \{0.4, 0.2, 0.1,0.3\}$, $\{\mu_1, \mu_2, \mu_3, \mu_4\} = \{0.2, -0.25, -0.3, 0.5\}$. 
Email dataset shares the similar scenario with Contact.
The ratio of nodes to hyperedges is different: For example, Goodreads is $36$ and Contact is $0.14$. The simulation of ITE  and basic treatment settings for both datasets follow~\cite{ma2022learning}.  It is worth noting that the ratio of nodes to hyperedges is contrasting between the two datasets, yielding sparse and relatively dense graphs, respectively. This exemplifies the diversity of our data selection, allowing for a more robust evaluation of algorithmic performance. Synthetic dataset SD generates initial treatments $T_i\sim$ Bernoulli$(r_0)$, where $r_0$ denotes an average affected ratio of nodes (which can be calculated easily through propagation simulations). Its covariates($x_i$) are simulated following Equation \ref{simu_x}, with distinct values of $\{\omega_1, \omega_2, \omega_3, \omega_4\} = \{0.4, 0.25, 0.15,0.2\}$, representing different group ratio.

\subsection{Supplementary Descriptions of Basic Assumptions}
\label{expd}
 The basic assumptions of the book-selling scenario should be satisfied: 1) Books and authors are many-to-many relationships; 2) the number of each kind of books sold initially is the same (we will take it as our future research topic if not the same); 3) Readers can learn about other books simply from authors of owned books. 4) Temporal change of each hyperedge is not considered. Assumptions for the other example are similar with the exception that rule 3 is replaced by ``Students can learn about core ideas simply from other students.'' Our diffusion framework is established by hypergraph-based IC as described in ~Section~\ref{methodology}. Within this model, the spread probability represents the willingness of readers to purchase the next book in the GoodReads dataset and the likelihood of core concepts spreading among students in the Contact or Email dataset.

\subsection{Supplementary Descriptions of Parameter Settings}
   Our experiments are conducted on Linux operating system with Python 3.10.14, torch 2.1.
 
\iffalse
\textbf{Exp1 for RQ1} 

% \textbf{Question 1: Why does the RIS-based method perform worse than traditional random algorithms?}
% Goodread is a very special large-scale dataset, in which a considerable number of nodes have very sparse degrees. This extreme sparsity of the graph network has an effect on RIS-based algorithms, making them extremely susceptible to getting trapped in nodes with sparse degrees and taking them as local optimal solutions, thus leading to an overall performance even weaker than that of random algorithms. One of our future research directions is how to use RIS-based methods to handle graph networks with a large number of sparse nodes.
\noindent \textbf{Question: Why is the effect of adding CauIM to HDD-based methods not significantly improved? }
HDD-based methods are purely heuristic, without any guarantee of approximate optimal solutions, such as greedy algorithms. Therefore, we lack sufficient evidence to prove that using HDD as the underlying framework in conjunction with Causal-IM will necessarily achieve performance beyond state-of-the-art. In fact, since these heuristic algorithms are all aimed at utilizing the degree of each node for filtering, instead of directly using greedy algorithms to search for local optimal solutions (the latter is more in line with the greedy search idea of CauIM), these heuristic algorithms combined with CauIM have not shown significant improvements. We will further analyze this wonderful experimental phenomenon in future work.
\fi

\noindent \textbf{Exp2 for RQ2} Aimed at indicating the volatility, we modify the scale of the noise to approximate instability degree and  take on 10 realizations of each experiment,  calculating the standard variance of ITE performance among those realizations as the final result. We conduct 20 groups of experiments with $\sigma$ varying between 0 to 20, and use step 2 when $\sigma$ is lower than 8 for its changes are not noticeable. While $\sigma$  is larger than 10, the curves increase too drastically. Thus we do not present in Figure~\ref{robust}.
\begin{figure}[b] 
\vskip 0.2in
\centering
	\subfigure[] {
    \includegraphics[width=0.4\columnwidth]{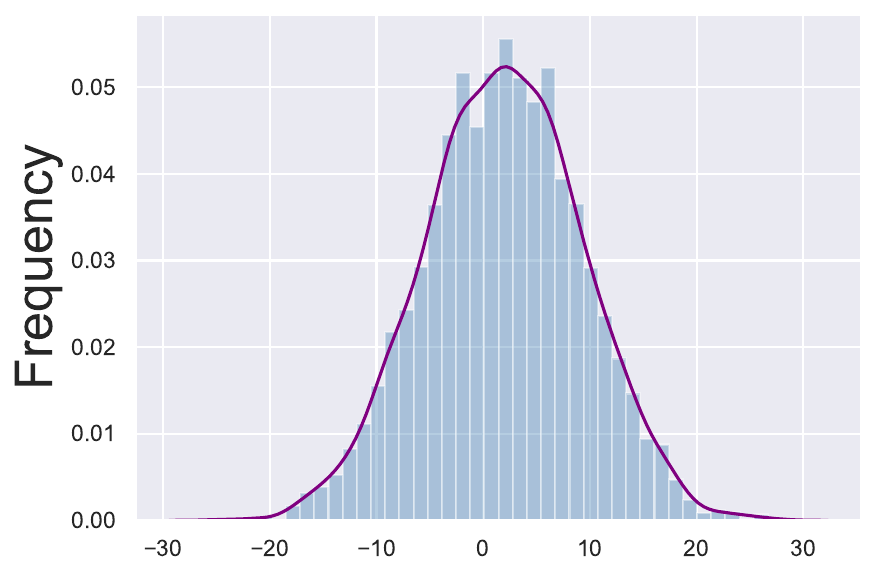}
		\label{GRITE}
	}
	%\hspace{-0.08in}
	\subfigure[] {
		\includegraphics[width=0.4\columnwidth]{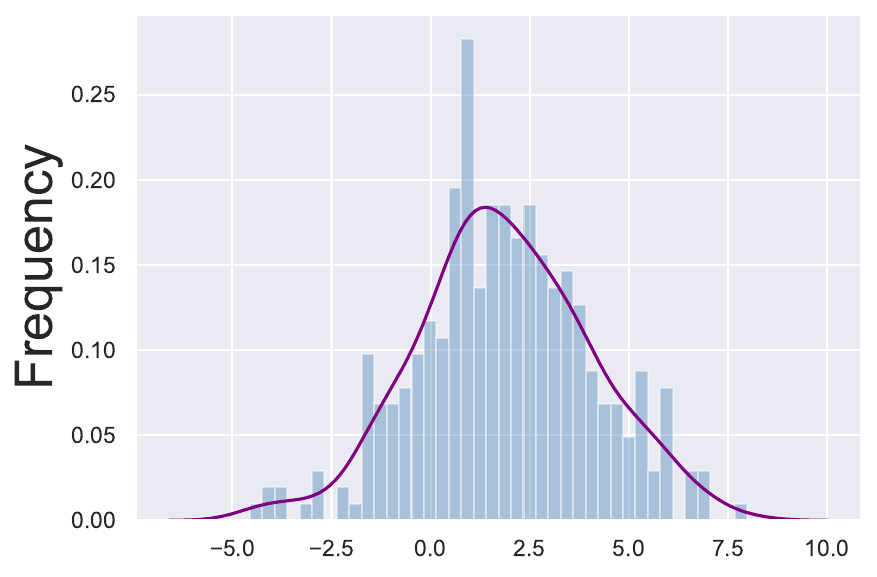}
		\label{CTITE}
	}
	\caption{a) Distribution of individual ITE in GoodReads Dataset. b) Distribution of individual ITE in Contact Dataset.}

	\label{ite_figure}
	\vskip -0.2in
\end{figure}
\section{The proof of Proposition~\ref{prop_np}} \label{proof_np}
\begin{proof}[Proof of sketch]
Notice that when $\tau_{i} = 1, \forall {v_i} \in \mathscr{V}$ and each hyperedge contains only one pair of nodes, then CauIM will degenerate to the traditional IM problem. Moreover, the IM problem with the IC diffusion model has been demonstrated to be NP-hard in \cite{kempe2003maximizing}. 

In another perspective, we can directly prove the optimal seed set of CauIM is one solution of the famous weighted set cover problem, which is well-known as NP-hard~\citep{hochbaum1996approximating}. The weighted set cover problem is equivalently defined as detecting whether there exists $k$ subsets within the total $m$ subsets, such that it can cover the universe of elements $U$. We aim to demonstrate the traditional weighted set cover problem can naturally generalize a specific instance in CauIM. We construct a bipartie graph, in which the left side denotes the total subsets $\mathscr{S} = \{\mathcal{S}_{1},\mathcal{S}_2,...\mathcal{S}_{m}\}$, while the right side denotes the elements $u \in U$, which can be seen as each node in the hypergraph. The edge between two sides is attributed with the probability $1$. Notice that in this instance, there exists a $k$-set cover if and only if there exists $k$-seed set such that could be propagated to the total $|S|+|U| = K+|U|$ nodes. Hence our CauIM is more complex and is NP-hard. 

\end{proof}

\section{The proof of Lemma~\ref{appro_hyper_lemma}} \label{appro_hyper}

\begin{proof}
It is equivalent to consider CauIM with $\tau_{i} = 1, {v_i}\in \mathscr{V}$. This approximate optimal guarantee is due to two elegant properties: 1) monotonicity and 2) submodularity. Firstly, according to $ \sigma(S_0 \cup v) - \sigma(S_0) \geq 0$ when $\tau_{i} = 1$, the monotonicity naturally holds. Secondly, we consider the submodularity in the hypergraph. This part has been proved by~\cite{gangal2016hemi}, where they constructed an augmented graph $V^{aug} = \{\mathscr{V} \cup \mathscr{H}, \mathscr{E}\}$. Here the edge $\mathscr{E}$ is composed by $ e:=(v, h), v \in \mathscr{V}, h \in \mathscr{H}$. Then conduct the same submodularity analysis as in the traditional graph, and the proof is completed.

\end{proof}

\section{The proof of Theorem~\ref{theorem_main}}\label{proof_theorem_main}

The first result based on $\tau_{i}>0$ naturally holds. It is because the monotonicity and submodularity still hold when each node is attributed with non-negative weights (\emph{i.e.}, ITE). We focus on the second result, when $\tau_{i} > 0$ is not guaranteed, these two important properties will not further hold. We derive new technology to a generalized version of weak monotonicity and weak submodularity.

\begin{proof}
To summarize, the core part is the following three claims:

\noindent \textbf{(Claim 1)}   $
    \sigma(S^{*}) \leq \sigma(S^{*} \cup S_{i}^g) + i \varepsilon_2.$
   
\noindent \textbf{(Claim 2)} $\sigma(S^{*} \cup S_{i}^g) \leq \sigma\left(S_{i+1}^g \right)-\sigma\left(S_i^g\right) + \sigma\left(S_i^g \cup S_{K-1}^*\right) + \varepsilon_1 \varepsilon_2$.

\noindent \textbf{(Claim 3)} $
    \sigma\left(S_i^g \cup S_{K-1}^*\right) \leq (K-1)[\sigma\left(S_{i+1}^g \right)-\sigma\left(S_i^g\right)] + \sigma(S_g^i) +(K-1)\varepsilon_1 \varepsilon_2.$
The optimal $K$-seed set is denoted as $S^{*} = \{s_{1}^{*}, s_{2}^{*},...s_{K}^{*}\}, S^{*}_k = \{s_{1}^{*}, s_{2}^{*},...s_{k}^{*}\}$, and the set output from our greedy {CauIM} as $S^{g} = \{s_{1}^{g}, s_{2}^{g},...s_{K}^{g}\}, S^{g}_k = \{s_{1}^{g}, s_{2}^{g},...s_{k}^{g}\}$, $k \in [K]$. Following~\cite{wang2016efficient}, notice that
\begin{equation}
    \begin{aligned}
    \sigma(S)=\sum_{u \in S}\left(\sum_{v \in V} \tau_v \cdot p_r(u, v)+\tau_u\right).
    \end{aligned}
\end{equation}
\iffalse
We define a facilitating problem as 
\begin{equation}
    \begin{aligned}
    \sigma_{abs}(S_0) = \bE_{S_0 \subseteq S_1 \subseteq ...\subseteq S_{n}}|\Phi(S_n)|.
    \end{aligned}
\end{equation}
Accordingly, we record the optimal solution of this facilitating problem as $A^* = \{a_1^*, a_2^*, ...\}, A_k^* =\{a_1^*, a_2^*,...a_k^* \} $.
\fi

We first construct the facilitating claim to analyze the variant of the monotonicity and the submodularity property. 

\textbf{Claim 1:}   $
    \sigma(S^{*}) \leq \sigma(S^{*} \cup S_{i}^g) + i \varepsilon_2.
   \label{eqn_compare}$
    
This claim can be achieved recursively. Considering two sets $T_1 \subseteq T_2$, and an additional vertex $v \nsubseteq T_1 $, we can follow~\cite{wang2016efficient} and achieve:

\begin{equation}
    \begin{aligned}
     \left( \sigma(T_1 \cup v) - \sigma(T_1) \right) = &\sum_{v_i \in R(v)} \tau_{i} \cdot p_{v v_i}\left(1-p_{T_1, v_i}\right)  \leq \sum_{v_i \in R(v)} |\tau_{i}| p_{v v_i} = \varepsilon_2.
    \end{aligned}
\end{equation}
Then claim $1$ can be achieved by incursively taking use of it $i$ times.

\textbf{Claim 2:} $\sigma(S^{*} \cup s_{i}^g) \leq \sigma\left(S_{i+1}^g \right)-\sigma\left(S_i^g\right) + \sigma\left(S_i^g \cup S_{K-1}^*\right) + \varepsilon_1 \varepsilon_2$.

We make an extension of submodularity. Considering $S \subseteq T \subseteq \mathscr{V}$, we have
\begin{equation}
    \begin{aligned}
     &\left( \sigma(S \cup v) - \sigma(S) \right) -  \left( \sigma(T \cup v) - \sigma(T) \right)\\ =& \sum_{v_i \in R(v)} \tau_{i} \cdot p_{v v_i}\left(p_{T, v_i}-p_{S, v_i}\right) \leq \varepsilon_1 \varepsilon_2.
    \end{aligned}
\end{equation}

Hence $ \sigma(S^{*} \cup s_{i}^g) $ can be bounded as follows:

\begin{equation}
    \begin{aligned}
     &\sigma(S^{*} \cup s_{i}^g) = \sigma( S_{K-1}^{*} \cup s^{*}_{K} \cup S_{i}^g ) \\ \leq& \sigma \left(S_i^g \cup s_K^* \right)-\sigma\left(S_i^g\right)+\sigma\left(S_i^g \cup S_{K-1}^*\right) + \varepsilon_1 \varepsilon_2 \\  \leq& \sigma\left(S_{i+1}^g \right)-\sigma\left(S_i^g\right) + \sigma\left(S_i^g \cup S_{K-1}^*\right) + \varepsilon_1 \varepsilon_2. 
    \end{aligned}\label{eqn_original}
\end{equation}
 The last line is due to the selection nature of the greedy algorithm.

 \textbf{Claim 3:} $
    \sigma\left(S_i^g \cup S_{K-1}^*\right) \leq (K-1)[\sigma\left(S_{i+1}^g \right)-\sigma\left(S_i^g\right)] + \sigma(S_g^i) +(K-1)\varepsilon_1 \varepsilon_2.
    \label{eqn_original_1}$

It is due to

\begin{equation}
    \begin{aligned}
   \sigma\left(S_i^g \cup S_{k-1}^*\right)   = & \sigma \left(S_i^g \cup S_{k-2}^* \cup s_{k-1}^* \right)\\ \leq & \sigma\left(S_i^g \cup s_{k-1}^*\right)-\sigma\left(S_i^g\right) + \sigma\left(S_g^i \cup S_{k-2}^*\right) + \varepsilon_1 \varepsilon_2 \\
   \leq& \sigma\left(S_{i+1}^g \right)-\sigma\left(S_i^g\right) + \sigma\left(S_g^i \cup S_{k-2}^*\right) + \varepsilon_1 \varepsilon_2 \\
    \end{aligned}
\end{equation}

 The last line is due to the selection nature of the greedy algorithm. Therefore, recursively, we have
\begin{equation}
    \begin{aligned}
    \sigma\left(S_i^g \cup S_{K-1}^*\right) \leq (K-1)[\sigma\left(S_{i+1}^g \right)-\sigma\left(S_i^g\right)] + \sigma(S_g^i) +(k-1)\varepsilon_1 \varepsilon_2.
    \end{aligned}
\end{equation}
\iffalse
\begin{equation}
    \begin{aligned}
    \ f\left(S^*\right) & \leq f\left(S_i^g \cup S^*\right) \\ & =f\left(S_i^g \cup S_{k-1}^* \cup\left\{s_k^*\right\}\right) \\ & \leq f\left(S_i^g \cup\left\{s_k^*\right\}\right)-f\left(S_i^g\right)+f\left(S_i^g \cup S_{k-1}^*\right) \\ & \leq \sigma \left(S_{i+1}^g\right)-f\left(S_i^g\right)+f\left(S_i^g \cup S_{k-1}^*\right) \\ & \leq k\left(f\left(S_{i+1}^g\right)-f\left(S_i^g\right)\right)+f\left(S_i^g\right) .\end{aligned}
\end{equation}
\fi

Combined with Equation~\ref{eqn_original}, Equation~\ref{eqn_compare} and Equation~\ref{eqn_original_1}, we have
\begin{equation}
    \begin{aligned}
    \sigma(S^{*}) \leq K [\sigma\left(S_{i+1}^g \right)-\sigma\left(S_i^g\right)] + \sigma(S_i^g) + K\varepsilon_1 \varepsilon_2 + i\varepsilon_2.
    \end{aligned}
\end{equation}
It equals to 
\begin{equation}
    \begin{aligned}
    &\sigma(S_{i+1}^g) \geq (1-\frac{1}{K})\sigma(S_{i}^g) + \frac{\sigma(S^*)}{K}-\varepsilon_1 \varepsilon_2 - \frac{i\varepsilon_2}{K}\\
    &(1-\frac{1}{K})^{k-i-1}  \sigma(S_{i+1}^g)\\ \geq& (1-\frac{1}{K})^{k-i}\sigma(S_{i}^g) + (1-\frac{1}{K})^{k-i-1} \left( \frac{\sigma(S^*)}{K}-\varepsilon_1 \varepsilon_2 - \frac{i\varepsilon_2}{K}\right).
    \end{aligned}
\end{equation}
Take the sum of $i \in \{0,2,...K-1\}$, we have
\begin{equation}
    \begin{aligned}
    \sigma(S^{g}_{K}) &\geq \sum_{i=0}^{K-1}(1-\frac{1}{K})^{k-i-1} \frac{\sigma(S^*) - K\varepsilon_1 \varepsilon_2}{K} - \sum_{i=0}^{K-1}(1-\frac{1}{K})^{k-i-1} \frac{i}{K}\varepsilon_2\\
    & \geq \left(1-\left(1-\frac{1}{k}\right)^k\right)\left(\sigma(S^*) - K\varepsilon_1 \varepsilon_2\right) - \varepsilon_2 (\frac{1}{e})^{1-\frac{1}{K}} \int_{0}^{1}xe^x dx \\
    & \geq (1-\frac{1}{e})\left(\sigma(S^*) - K\varepsilon_1 \varepsilon_2\right) - \varepsilon_2 e^{\frac{1}{K}-1}.
    \end{aligned}
\end{equation}

This lower bound is also applicable when $\tau_i \leq 0$ exists.

\end{proof}
\section{The proof of Theorem~\ref{robust_thm}} \label{proof_robust_thm}
\begin{proof} Following~\citet{chen2013information} , we have ($k \in [K]$)
\begin{equation}
\begin{aligned}
 f\left(S_i^g \cup\left\{{s}^{*}_{k}\right\}\right)
& \leq \frac{1}{1-\gamma} \hat{f}\left(S_i^g \cup\left\{\bar{s}^*_{k}\right\}\right) \\
& \leq \frac{1}{1-\gamma} \hat{f}\left(S_i^g \cup\left\{s_{i+1}\right\}\right) \\
& \leq \frac{1+\gamma}{1-\gamma} f\left(S_i^g \cup\left\{s_{i+1}\right\}\right) .
\end{aligned}
\end{equation}

Analogously, we have
\begin{equation}
\begin{aligned}
f\left(S_{i+1}^g\right) \geq \frac{1-\gamma}{1+\gamma}\left(\left(1-\frac{1}{K}\right) f\left(S_i^g\right)+\frac{f\left(S^*\right)}{K} - \varepsilon_1 \varepsilon_2 - \frac{i\varepsilon_2}{K}\right) .\end{aligned}
\end{equation}

Hence, recursively,
\begin{equation}
\begin{aligned}
f\left(S_K^g\right) 
& \geq \sum_{i=0}^{K-1}\left(\frac{(1-1 / K)(1-\gamma)}{1+\gamma}\right)^{K-i-1} \cdot \frac{1-\gamma}{(1+\gamma) K} \cdot \left[f\left(S^*\right) - K\varepsilon_1 \varepsilon_2\right] \\&- \sum_{i=0}^{K-1}(\frac{(1-1/K)(1-\gamma)}{1+\gamma})^{k-i-1} \frac{i}{K}\varepsilon_2 \\
& \geq \frac{1-\left(\frac{1-\gamma}{1+\gamma}\right)^K\left(1-\frac{1}{K}\right)^K}{(1+\gamma) K /(1-\gamma)-K+1} \left[f\left(S^*\right) - K\varepsilon_1 \varepsilon_2\right] - \varepsilon_2 e^{\frac{1}{K}-1} \\
& \geq \frac{1-\left(\frac{1-\gamma}{1+\gamma}\right)^K \cdot \frac{1}{e}}{(1+\gamma) K /(1-\gamma)-K+1} \left[f\left(S^*\right) - K\varepsilon_1 \varepsilon_2\right] -\varepsilon_2 e^{\frac{1}{K}-1} \\
& \geq \frac{1-\frac{1}{e}}{(1+\gamma) K /(1-\gamma)-K+1} \left[f\left(S^*\right) - K\varepsilon_1 \varepsilon_2\right] - \varepsilon_2 e^{\frac{1}{K}-1}\\
& \geq\left(1-\frac{1}{e}\right)\left(1-\frac{(1+\gamma) K}{1-\gamma}+K\right) \left[f\left(S^*\right) - K\varepsilon_1 \varepsilon_2\right]-\varepsilon_2 e^{\frac{1}{K}-1}\\
& \geq\left(1-\frac{1}{e}-\left(\frac{(1+\gamma) K}{1-\gamma}-K\right)\right) \left[f\left(S^*\right) - K\varepsilon_1 \varepsilon_2\right]-\varepsilon_2
e^{\frac{1}{K}-1} \\
& \geq\left(1-\frac{1}{e}-\varepsilon\right) \left[f\left(S^*\right) - K\varepsilon_1 \varepsilon_2\right] -\varepsilon_2 e^{\frac{1}{K}-1} .
\end{aligned}
\end{equation}

\end{proof}

\section{The proof of Corollary~\ref{robust_coro}} \label{proof_robust_coro}

\begin{equation}
    \begin{aligned}
    |\hat{\sigma}(S) - \sigma(S)| &= \sum_{u \in S}\left(\sum_{v \in V} \left(\hat{\tau}_v - \tau_v\right) p_r(u, v)+\hat{\tau}_v - \tau_v \right) \\
    &\leq  \frac{\gamma \sigma(S)}{\sigma_{naive}(S)} \sum_{u \in S}\left(\sum_{v \in V} p_r(u, v)+1 \right) := \gamma \sigma(S).
    \end{aligned}
\end{equation}

Hence $|\frac{\hat{\sigma}(S) }{\sigma(S)} - 1 | \leq \delta \frac{\sigma_{naive}(S)}{\sigma(S)} \leq \gamma$. We have proved.

\begin{algorithm}[t]
\caption{Monte Carlo-based greedy {CauIM}} 
\label{alg1: traditional_cauim_mc} 

\begin{algorithmic}[1] 

\REQUIRE Hypergraph $\mathscr{G}(\mathscr{V}, \mathscr{H})$, size of the seed set $K$, a constant $T$.

\STATE Initialization: $S_0 = \emptyset$, $k=0$.
\STATE \textbf{ITE recovery}.
\FOR{$|S_0| <K$}

\STATE $v_0 = \arg \max _{v \notin S_{0}}\left\{\textbf{MC}\left(S_{0} \cup\{v\}, T \right) -\textbf{MC}\left(S_{0}, T \right)\right\}.$
\STATE $S_0=S_0{} \cup\left\{v_0\right\}$

\ENDFOR
\ENSURE The deterministic seed set $S_0$ with $|S_0| = K$.

\textbf{Function MC:}
\REQUIRE Iteration $T$, current node set

\STATE count = 0
\FOR{$i \in [T]$}
\STATE We conduct the diffusion process with $T$ steps, and compute the sum of causal effects $\sigma_{T}(S_0) = \sum_{\text{node~}j \text{~is~activated}}\tau_j$. $\text{count} = \text{count} + \sigma_{T}(S_0)$.
\ENDFOR

\ENSURE Return $count/T$.

\end{algorithmic} 
\end{algorithm}

\section{Auxiliary algorithms and Additional discussions}\label{related}

\paragraph{More discussions of HGCN module}
The higher-order interference representation  $O_{i}()$ is learned according to to~\citep{ma2022learning,ma2021causal}, employing hypergraph convolution operator within HGCN module: $\mathbf{O}^{(l+1)}=\text{LeakyReLU}\left(\mathbf{LO}^{(l)}\mathbf{W}^{(l+1)}\right)$ where $\mathbf{L}$ denotes Laplacian matrix aggregating the graph feature information, and $\mathbf{O}$ is initially calculated using $Z$ .

\begin{figure}[t]
%\vskip 0.2in
\centering
\subfigure[Overall Hypergraph Diagram]{
    \begin{minipage}[t]{0.4\linewidth}
        %\vskip 0.2in
        \begin{center}
        \centerline{\includegraphics[width=\columnwidth]{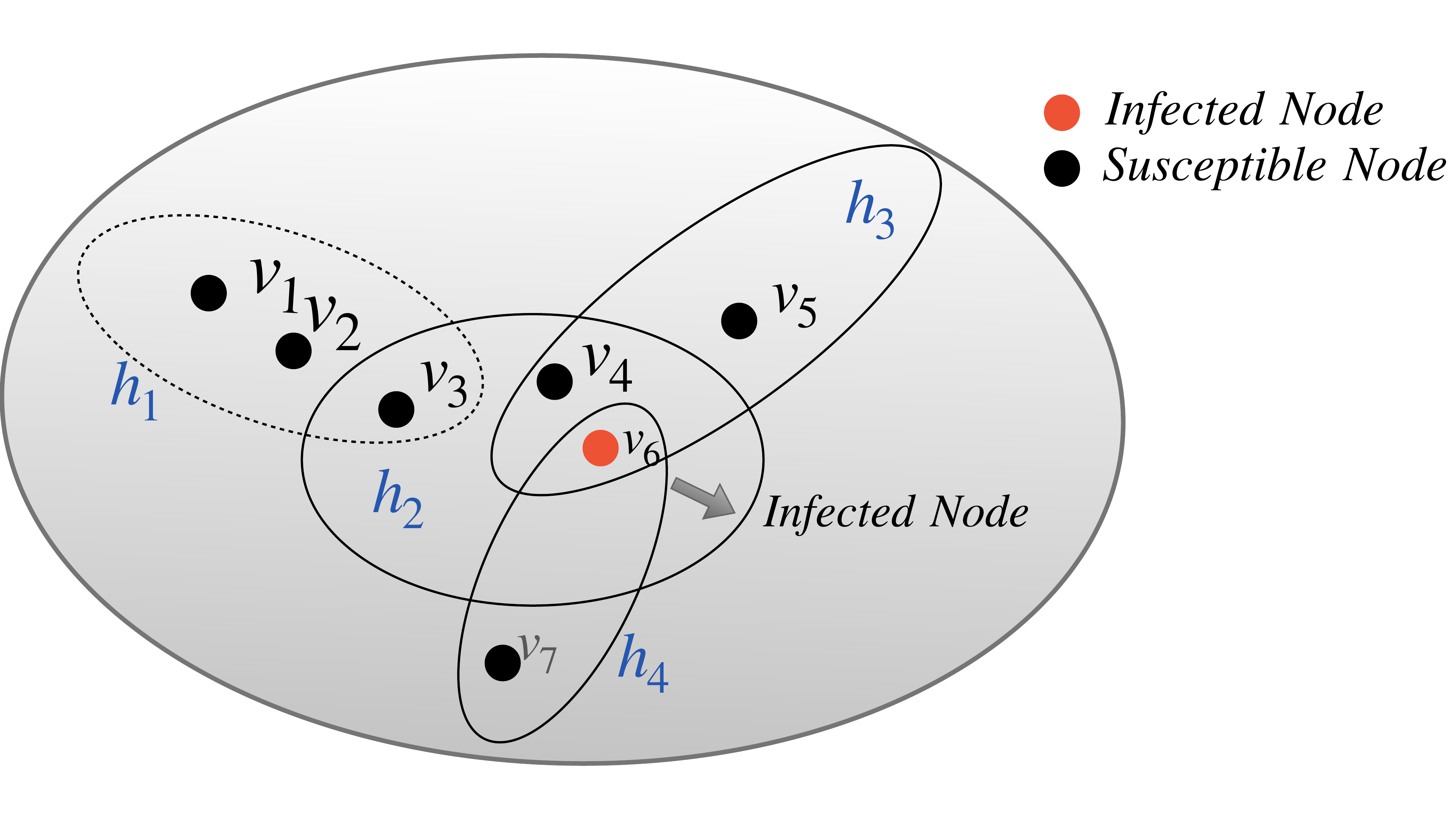}}
        \label{pic1a}
        \end{center}
        %\vskip -0.2in
    \end{minipage}
    }
\subfigure[Spread Process]{
\begin{minipage}[t]{0.4\linewidth}
            %\vskip 0.2in
            \begin{center}
            \centerline{\includegraphics[width=\columnwidth]{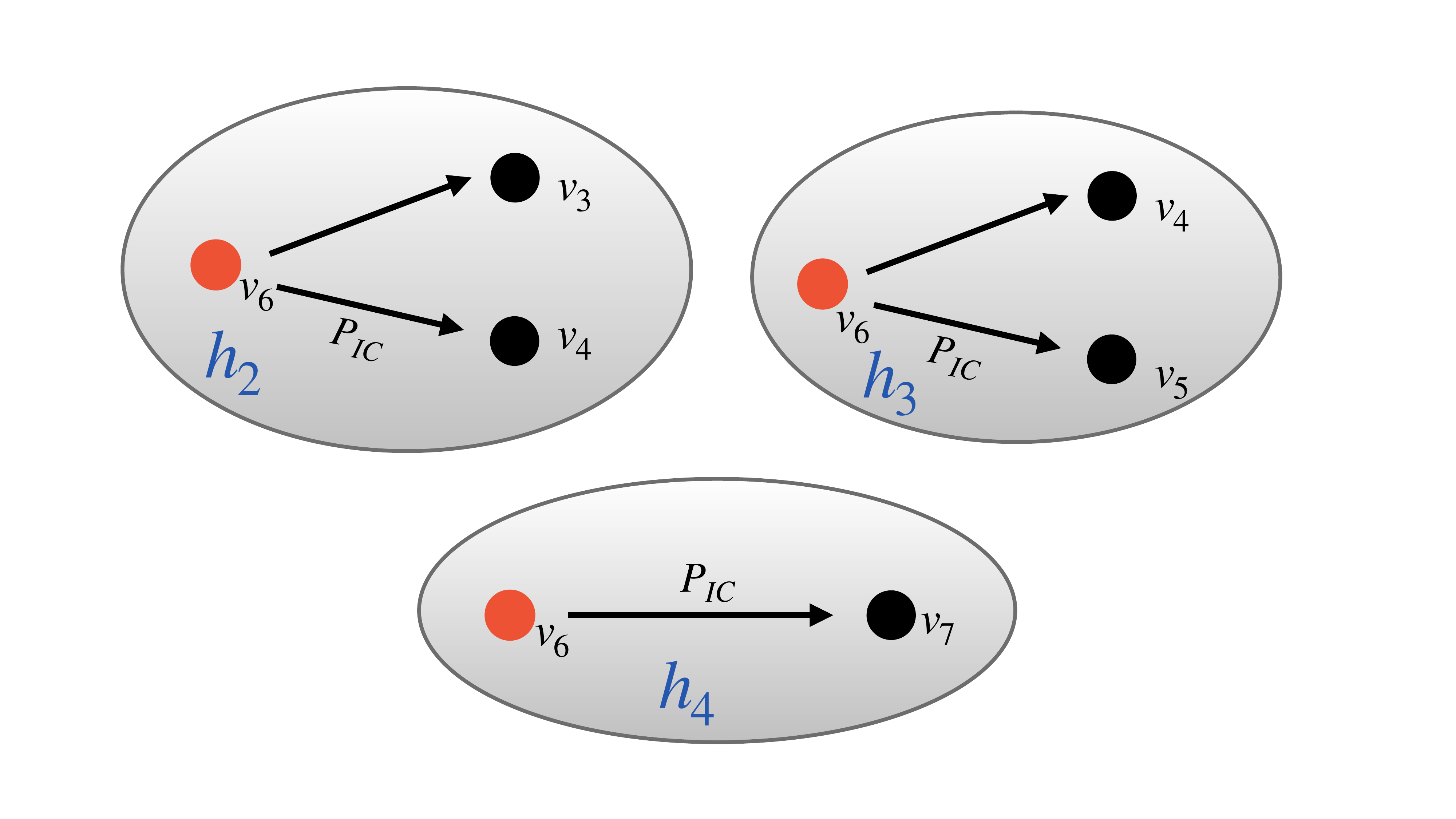}}
            \label{pic1b}
            \end{center}
            %\vskip -0.2in
    \end{minipage}
      
}

\caption{{An illustration of propagation process. With $v_6$ infected, the hyperedges ($h_2, h_3 $ and $h_4$ with solid line) containing $v_6$ are chosen to be the candidates. Nodes ($v_3, v_4, v_5, v_7$) in these hyperedges will potentially convert to the infected state. The spreading probability $P_{IC}$ is the internal parameter in hypergraph-based Independent Cascading (IC) model~\citep{xie2022influence}.}
}
%\vskip -0.2in
 \label{pic1} 
\end{figure}

\paragraph{More discussions upon IM} For the \emph{first} limitation of IM, the exploration of hypergraph-based IM is urgent to be settled. Since the hypergraph structure is consistent with ample real-world scenarios, especially when different nodes in the graph contain high-level, multivariate relationships, where the traditional graph is hard to model efficiently.\iffalse It is essential to solve the problem, for hypergraphs has significant advantages in multivariate relationships modeling  facing complex real-world big data nowadays. Why do we think of hypergraphs?  As the key element of IM, the spread process is decided by diffusion models that usually describe the social network of the nodes, and is mostly defined on graphs.  However, the conventional structure of graphs ignores some inner relationship of nodes that may work for diffusion process. To get those nodes linked, hypergraph is introduced. \fi ~Take an example of the disease propagation problem regarded as IM on hypergraph in Figure \ref{pic1a}. Students are connected through social circles, where each circle can be represented as a hyperedge. Different from ordinary graphs, the influence of node $v_6$ is spread not considering the edge consisting of a pair of nodes, but on more its affiliated hyperedges shown in Figure \ref{pic1b}. \iffalse Within a group of nodes on hypergraphs, influence expanded in higher-order relationships provides a more complete portrayal of complex relationships, and sets the stage for research of the IM problem on hypergraphs.\fi Existing hypergraph-based studies are mostly separated into two parts: 1) people are committed to developing heuristic methods but with not enough theoretical support~\citep{antelmi2021social, xie2022influence}. 2) people developed fundamental theoretical guarantees only on a specific form of hypergraph structure~\citep{zhu2018social}. In general, a general hypergraph-based IM with theoretical guarantee and high empirical efficiency still needs to be explored.

For the \emph{second} limitation of IM, the original optimization objective needs to be reconsidered in many cases. The previous objective is directly implied as the sum of node numbers, which stems from empirical or even philosophical determinations and lacks rigorous mathematical arguments. This implication is attributed to the over-simplification of real-life situations--current IM methods tend to overlook the dynamic nature of node influence weights (ITE) within their environments. Traditional IM methods like simulation-based~\citep{kempe2003maximizing,leskovec2007cost} and sketch-based~\citep{borgs2014maximizing, tang2014influence,tang2015influence,wang2016bring} ones focusing on maximizing total numbers (or generalized weighted IM) might fail to pursue such maximum total potential benefit (pursue ``larger varying weighted sum'' instead of ``larger number of nodes''). \citet{wang2016efficient} proposed the new weighted IM problem whereas they rely on non-negative assumptions and  and lacks generalizability to complex scenarios involving hypergraphs and varying node weights. Recently, learning-based IM methods~\citep{chen2023touplegdd,kamarthi2019influence,kumar2022influence,panagopoulos2020multi,ling2023deep} mostly learn potential node representations as a marginal gain of node influence, thus guiding the seed node finding process. Sharing different object functions from ours, many of these existing methods might struggle with limited generalization capabilities and result reliability concerns. Overall, there is a critical need to explore novel objective functions.

\paragraph{More discussions upon extended algorithms} To simplify the discussion, in the additional algorithms we provide, we have omitted the process of dynamically updating ITE based on the surrounding nodes’ state changes during each propagation. CELF-CauIM (Algorithm \ref{alg1: traditional_cauim_mc}) and Monte Carlo-based greedy CauIM (Algorithm \ref{alg_celf}) can be derived naturally from G-CauIM.

\begin{algorithm}[htpb]
\caption{ CELF-{CauIM}} 
\label{alg_celf} 

\begin{algorithmic}[1] 

\REQUIRE Hypergraph $\mathcal{G}(\mathscr{V}, \mathscr{H}, \mathbb{H})$, size of the seed set $K$, causal influence function $\sigma(\cdot)$. 

\STATE Initialization: $S^* = \emptyset$, $MargDic = \emptyset $.
\STATE \textbf{ITE recovery}.

\STATE $MargDic$  stores the marginal gain  $\sigma(\{v\})$ of each node.

\STATE Sort $MargDic$ in decreasing order of value.

%/*Check if previous top node stays on top after each sort*/
\FOR{$|S^*| <K$}
\STATE Move out the node $cur$ with the largest marginal gain in $MargDic$.
\STATE Re-compute marginal gain of of node $cur$ with the current seed set $S^*$: $MargDic[ cur ] = \sigma\left(S^* \cup\{cur\}\right)-\sigma\left(S_{0}\right)$
\STATE Check if previous top node stays on top after sort $MargDic$ again. If true, $S^*=S^*{} \cup\left\{cur\right\}$, and find the second seed; else remove the second largest marginal gain node in $MargDic$, then repeat the last operation.

\ENDFOR
\ENSURE The deterministic seed set $S^*$ with $|S^*| = K$.

\end{algorithmic} 
\end{algorithm}

% \begin{algorithm}[htpb]
% \caption{RIS-{CauIM}} 
% \label{alg_ris} 

% \begin{algorithmic}[1] 

% \REQUIRE $\mathcal{G}(\mathscr{V}, \mathscr{H}, \mathbb{H})$, size of the seed set $K$, causal influence function $\sigma(.)$. \iffalse \zhiheng{revise it to the Celf!} \fi

% /* $N_0$ is a large number, $ITEDic$ restores the sum of ITE influenced by each node */
% \STATE Initialization: $S_0 = \emptyset$, $ITEDic = \{\} $, $RR = \{\} $, $\theta = N_0$.

% /* Presented in Algorithm 1 */
% \STATE \textbf{ITE recovery}.

% /* Build $RR$ set randomly */
% \WHILE{$|RR| <\theta$}
% \STATE Choose the source node $u$ from $\mathscr{G}$ randomly and simulate influence spread from u, forming a route $L$ containing the set of nodes:   = \textbf{Simulate(G, u)}.
% \FOR{$v$ in $L$}
% \STATE Calculate the sum of ITE of the route $L$ as $\tau_L$, and add it to a dictionary variable $ITEDic$ for each node v in route $L$: $ITEDic[v] = ITEDic[v]+ \tau_L$.

% \ENDFOR
% \STATE $RR = RR \cup\left\{L\right\}$.

% \ENDWHILE

% /* Node selection process */
% \FOR{$|S_0| <K$}
% \STATE $v_0 = argmax_v\{ITEDic[v]\}$.
% \STATE $S_0=S_0{} \cup\left\{v_0\right\}$.
% \STATE Remove all $L$ in $RR$ set that contains $v_0$.
% \STATE Update  $ITEDic$ with the new $RR$ set.
% \ENDFOR
% \ENSURE The deterministic seed set $S_0$ with $|S_0| = K$.

% \end{algorithmic} 
% \end{algorithm}

\begin{algorithm}[htpb]
\caption{ HDD-{CauIM}} 
\label{alg_hdd} 

\begin{algorithmic}[1] 

\REQUIRE $\mathcal{G}(\mathscr{V}, \mathscr{H}, \mathbb{H})$, size of the seed set $K$, causal influence function $\sigma(\cdot)$.

\STATE Initialization: $S^* = \emptyset$, $DegITE = \{\} $.

/* Presented in Algorithm 1 */
\STATE \textbf{ITE recovery}.

\STATE  $DegITE[v]$ stores the sum of ITE of the neighbour nodes of each node $v$, where $N_r(v)$ represents the neighbour nodes of $v$: $DegITE[v] = \sum_{v_r \in N_r(v)} \mathbb{E}\tau_{r}$.

\WHILE{$|S^*| <K$}

\STATE Choose $v_0$ with the max value in $DegITE$ as the seed: $v_0 = argmax_v\{DegITE[v]\}$,  $S^*=S^*{} \cup\left\{v_0\right\}$.
\STATE Calculate sum of ITE for each node $v_r$ in neighbors of $v_0$ as edge value $Edge$: $Edge = \sum_{v_{rq} \in N_r(v_r)} \mathbb{E}\tau_{{rq}}$ .
\STATE Remove the edge influence of the chosen seed node $v_0$: $DegITE[v_r] = DegITE[v_r] - Edge$ .

\ENDWHILE
\ENSURE The deterministic seed set $S^*$ with $|S^*| = K$.

\end{algorithmic} 
\end{algorithm}

% For RIS-CauIM~(Algorithm.~\ref{alg_ris}), during the multiple sampling process, we also maintain a dictionary $ITEdic$ for each node. We generate simulated paths for the current node  and add the corresponding ITE value for each node in the path to the corresponding  key in $ITEdic$ (line $6$). Based on this dictionary value, we continuously select the best node, delete the selected node to simplify the graph structure, and iterate the above process (line 4-6). We prove that in this case we can also achieve the approximate optimal guarantee similar to the previous $(1-\frac{1}{e} - \varepsilon)$ guarantee ($\varepsilon>0$ is a small constant). We leave the detailed analysis in the future work. 

For HDD-CauIM~(Algorithm~\ref{alg_hdd}), we replace the selection criterion of searching for nodes with the highest degree with the nodes with the highest sum of the average ITE among the neighboring nodes (line $5-6$). Note that according to the common drawback of heuristic methods, this type of method has no theoretical support and often falls into a common dilemma: nodes with the highest local degree (neighboring ITE) may not necessarily represent seeds that can bring greater overall influence. This point has also been verified in the experiments.

\paragraph{More discussions upon submodularity of hypergraph} {\citet{antelmi2021social, zheng2019non} claimed their hypergraph does not contain submodularity. However, their propagation mechanism is different from traditional IC. {Besides, they considered the form of  directed hypergraph where a hyperedge $(H, t)$  comprises a set of head nodes $H$ is and a single tail node $t$.\iffalse \zhiheng{revised}\fi} Further, \citet{gangal2016hemi} demonstrated the submodularity of a general class of hypergraph. \iffalse \cite{wang2016efficient} proposed the new weighted influence maximization problem. However, their attributes corresponding to each node is a priori assumed to be non-negative, which is different from general ITE (ITE can be positive or negative). \fi Moreover, \citet{erkol2022effective} stated that the submodularity on the temporal network might not be held.}

\paragraph{More discussions upon additional challenges compared to other sum-weighted IMs.} Noteworthy, according to the ITE estimation form, CauIM can be seen as the generalized case of the weighted IM. However, our task is significantly more challenging. Firstly, beginning with the traditional graph, sum-weighted schemes often ensure an approximate optimum guarantee effortlessly, for traditional IM can be extended to weighted IM naturally~(see definition~5~in~\citet{mossel2007submodularity}). However, it does not make sense in our setting since the ITEs for each node would not guarantee to be always non-negative (for instance, some non-compliers, i.e., $\rm{ITE}<0$, exist). Furthermore, the argument on submodularity is more complex in hypergraph (defer to Appendix~\ref{related}). Besides, the ITE estimation does not remain constant between iterations due to the experimental sensitivity (the activation status of surrounding nodes change). In sum, to the best of our knowledge, in the setting of hypergraph-based ITE, the weakened version of the approximate optimum guarantee is an effective supplement to the IM community.

\end{document}